\documentclass[11pt]{article}

\usepackage[T1]{fontenc}
\usepackage[utf8]{inputenc}

\usepackage[margin = 0.9in]{geometry}
\usepackage{booktabs} % For formal tables
\usepackage{enumitem}

\usepackage{latexsym,amsmath,amsfonts,amssymb,stmaryrd,mathtools}
\usepackage{amsthm}
\usepackage{xcolor}
\usepackage{algorithm, algorithmicx}
\usepackage[noend]{algpseudocode}
\usepackage{graphicx}
\usepackage{hyperref}
\usepackage{wrapfig}
\usepackage{listings, fancyvrb, multicol}
\usepackage{multirow}
\usepackage{times}
\usepackage{comment}
\usepackage{authblk}
\usepackage{xspace}
\usepackage{pifont}
\usepackage{parcolumns}
%\usepackage{graphicx}
%\usepackage{natbib}

%%%%%%% MACROS %%%%%%%%%%%%
\definecolor{ao}{rgb}{0.0, 0.5, 0.0}

\newtheorem{theorem}{Theorem}[section]
\newtheorem{lemma}[theorem]{Lemma}

\newtheorem{definition}[theorem]{Definition}

\newcommand{\Hao}[1]{\noindent
	{\color{ao}{\textbf{Hao: }#1}}}

\newcommand{\cascam}{recoverable CAS}
\newcommand{\alg}{\mathcal{A}}
\newcommand{\ppm}{shared cache}
\newcommand{\readop}{\text{Read}}
\newcommand{\cas}{\text{CAS}}

\newcommand{\writeop}{\textit{write}}
\newcommand{\recover}{\textit{Recover}}
\newcommand{\notify}{\textit{notify}}
\newcommand{\var}[1]{\textit{#1}}

\newcommand{\generalSim}{Constant-Delay Simulator}
\newcommand{\generalSimOpt}{Low-Computation-Delay Simulator}
\newcommand{\normalOne}{Persistent Normalized Simulator}

\newcommand{\hcas}[2]{CAS$_{#2}^{#1}$}
\newcommand{\lcas}[1]{cas$_{#1}$}
\newcommand{\hwrite}[2]{WRITE$_{#2}^{#1}$}
\newcommand{\lwrite}[1]{write$_{#1}$}
\newcommand{\hread}[2]{READ$_{#2}^{#1}$}
\newcommand{\lread}[1]{read$_{#1}$}
\newcommand{\hrc}[2]{EXEC$_{#2}^{#1}$}
\newcommand{\lrc}[1]{exec$_{#1}$}
\newcommand{\getobj}{getObjectIndex}
\newcommand{\retire}{retire}

\newcommand{\hide}[1]{}

\lstset{basicstyle=\footnotesize\ttfamily, tabsize=2, escapeinside={@}{@}}
%\lstset{literate={<<}{{$\langle$}}1  {>>}{{$\rangle$}}1}
\lstset{literate={<}{{$\langle$}}1  {>}{{$\rangle$}}1}
\lstset{language=C, morekeywords={CAS,commit,empty,job,taken,entry,GOTO,bool}}
\lstset{xleftmargin=5.0ex, numbers=left, numberblanklines=false, frame=single, showstringspaces=false}
%\lstset{caption={\texttt{tuple} and \collect{} algorithms},captionpos=b,label={alg:collect}}
\makeatletter
\lst@Key{countblanklines}{true}[t]%
{\lstKV@SetIf{#1}\lst@ifcountblanklines}

\lst@AddToHook{OnEmptyLine}{%
	\lst@ifnumberblanklines\else%
	\lst@ifcountblanklines\else%
	\advance\c@lstnumber-\@ne\relax%
	\fi%
	\fi}
\makeatother

%\lstset{escapeinside={@}{@}, tabsize=2, numbers=left, numberblanklines=false}
%\lstset{literate={<}{{$\langle$}}1  {>}{{$\rangle$}}1}
%\lstset{language=C, morekeywords={CAS}}
%\lstset{xleftmargin=5.0ex, numbers=left, numberblanklines=false, frame=single}

%%%%appendix organization%%%%%%
\usepackage{environ}

\newif\ifappendix
\NewEnviron{maybeappendix}[1]
{\ifappendix
	\expandafter\global\expandafter\let\csname putmaybeappendix#1\endcsname\BODY%
	\else
	\expandafter\newcommand\csname putmaybeappendix#1\endcsname{}\BODY%
	\fi}
\newcommand{\putmaybeappendix}[1]{\csname putmaybeappendix#1\endcsname}

\appendixtrue

%%%%%%%%%%%%%%%%%%%%%%%%%%%%%

%%%%%%%%%% Title %%%%%%%%%%%%%%

%\renewcommand{\thefootnote}{\fnsymbol{footnote}}
%\footnote{This paper is eligible for the best student paper award. The first and last authors are full-time students.}

%\authorrunning{N. Ben-David, G.\,E. Blelloch and Y. Wei}%mandatory. First: Use abbreviated first/middle names. Second (only in severe cases): Use first author plus 'et al.'

%\Copyright{Naama Ben-David, Guy E. Blelloch and Yuanhao Wei}%mandatory, please use full first names. LIPIcs license is "CC-BY";  http://creativecommons.org/licenses/by/3.0/

%\subjclass{Concurrent Algorithms}% mandatory: Please choose ACM 2012 classifications from https://www.acm.org/publications/class-2012 or https://dl.acm.org/ccs/ccs_flat.cfm . E.g., cite as "General and reference $\rightarrow$ General literature" or \ccsdesc[100]{General and reference~General literature}. 

%\keywords{non-volatile memory, durability, concurrent algorithms}%mandatory
\begin{document}
	%\title{Making Concurrent Algorithms Detectable}
	\title{Delay-Free Concurrency on Faulty Persistent Memory}
	%\title{Constant-Delay Persistent Simulations of Concurrent Programs}
	
	\author[1]{Naama Ben-David}
	%\email{nbendavi@cs.cmu.edu}
	\author[1]{Guy E. Blelloch}
	%\affiliation{\institution{Carnegie Mellon University, USA}}
	%\email{guyb@cs.cmu.edu}
	\author[2]{Michal Friedman}
	%\email{michal.f@cs.technion.ac.il}
	\author[1]{Yuanhao Wei}
	\affil[1]{Carnegie Mellon University, USA}
	\affil[2]{Technion, Israel}
	\affil[ ]{\textit{\{nbendavi, guyb, yuanhao1\}@cs.cmu.edu}}
	\affil[ ]{\textit{michal.f@cs.technion.ac.il}}
	%\affiliation{\institution{Carnegie Mellon University, USA}}
	%\email{yuanhao1@cs.cmu.edu}
	%\affil{Carnegie Mellon University}
	%\affil{\{nbendavi,guyb,yuanhao1\}@cs.cmu.edu}
	%\date{}
	
	% \footnotetext[1]{This paper is eligible for the best student paper award.}
	\maketitle	

	\begin{abstract}
  Non-volatile memory (NVM) promises persistent main memory that
  remains correct despite loss of power. This has sparked a line of
  research into algorithms that can recover from a system crash.
  Since caches are expected to remain volatile, concurrent data
  structures and algorithms must be redesigned to guarantee that they
  are left in a consistent state after a system crash, and that the
  execution can be continued upon recovery.  However, the prospect of
  redesigning every concurrent data structure or algorithm before it
  can be used in NVM architectures is daunting.

  In this paper, we present a construction that takes any concurrent
  program with reads, writes and CASs to shared memory and makes it
  persistent, i.e., can be continued after one or more processes fault
  and have to restart.  Importantly the converted algorithm has
  constant computational delay (preserves instruction counts on each
  process within a constant factor), as well as constant recovery delay
  (a process can recover from a fault in a constant number of
  instructions).  We show this first for a simple transformation, and
  then present optimizations to make it more practical, allowing for a
  tradeoff for better constant factors in computational delay, for
  sometimes increased recovery delay.  We also provide an optimized
  transformation that works for any normalized lock-free data
  structure, thus allowing more efficient constructions for a large
  class of concurrent algorithms.
		
  Finally, we experimentally evaluate transformations by
  applying them to a queue. We compare the performance of our
  transformations to that of a persistent transactional memory
  framework, Romulus, and to a hand-tuned persistent queue. We show
  that our transformations perform favorably when compared to
  Romulus. Furthermore, while the hand-tuned version sometimes
  outperforms our transformations, the difference is not an
  unreasonable price to pay for the generality and ease of use that we
  provide.
\end{abstract}

	\pagebreak

\clearpage

	\section{Introduction}

A new wave of memory technology, known as Non-Volatile Memory (NVM), is making its way into modern architectures. NVM is expected to replace DRAM for main memory, and promises many attractive features, including persistence under transient failures (e.g., a power failure). This persistence
%seemingly a simple improvement over previous technology, 
introduces the possibility of recovering the data structure from main memory after a system failure, sparking a flurry of research in this area. However, it also introduces potential for inconsistencies, since caches are expected to remain volatile in these new architectures, losing their contents upon a crash. 
%New questions arise in how to design persistent algorithms, and what it means for them to be correct. 
There has been a lot of work on developing algorithms for index trees~\cite{chen2015persistent,lee2017wort,venkataraman2011consistent, lejsek2009nv,oukid2016fptree}, lock based data structures \cite{nawab2017dali, chakrabarti2014atlas}, and lock-free data structures \cite{friedman2018persistent, logfree,cohen2018inherent}. 
%On the formal definitions side, there has been a line of work on defining linearizability for this new model \cite{aguilera2003strict, berryhill2016robust, izraelevitz2016linearizability}. 

A natural question that arises is whether we can find general mechanisms that would port algorithms for current machines over to the new persistent setting.
One approach has been the development of persistent transactional memory frameworks~\cite{kolli2016high,liu2017dudetm,memaripour2017atomic,correia2018romulus}. 
%\Naama{something about advantages/disadvantages of such an approach here.}
This can be an effective approach, although it does not handle code between transactions.
Another approach to achieve arbitrary persistent data structures is the design of persistent universal constructions~\cite{izraelevitz2016linearizability,cohen2018inherent}. In particular, Cohen \textit{et al.}~\cite{cohen2018inherent} present a universal construction that only requires one flush per operation, thereby achieving optimality in terms of flushes. However, universal constructions often suffer from poor performance, because they sequentialize accesses to the data structure. Furthermore, universal constructions are only applicable to data structures with clearly defined operations, and cannot apply to a program as a whole. For these reasons, even a seemingly efficient universal construction leaves more to be desired.

In this paper, we consider simulators that take any concurrent program and transform it by replacing each instruction of the original program with a simulation that has the same effect. We define \emph{persistent simulations}, which exhibit a tradeoff between their \emph{computation delay}, meaning the overhead introduced by the simulation in a run without crashes, and their \emph{recovery delay}, which is the maximum time the simulation takes to recover from a system crash. Our first result is the presentation of a general persistent simulator, called the \emph{\generalSim}, that takes any concurrent program using Reads, Writes, and compare-and-swap (CAS) operations, and simulates it with constant computation and constant recovery delays.
%Our general simulator provides \emph{detectability}, meaning that the program can be continued after recovering from a crash~\cite{friedman2018persistent}.

\begin{theorem}
	Any concurrent program that uses Reads, Writes, and CAS operations can be simulated in the persistent memory model with constant computation delay and constant recovery delay.
\end{theorem}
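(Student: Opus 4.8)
The plan is to exhibit an explicit simulator that replaces each original instruction by a short, fixed-length block of base instructions (reads, writes, CAS, flushes, and fences), and then to argue both correctness (durable linearizability) and the two delay bounds. First I would fix the model precisely: main memory (NVM) survives a crash, caches do not, a process makes a value persistent only by an explicit flush followed by a fence, and on recovery each crashed process restarts by running a \recover{} routine from persistent per-process state. The correctness target is that the simulated execution is durably linearizable, so that after any crash the recovered processes can continue a single consistent logical run.

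The heart of the construction is a \cascam{} (\idcas{}) object together with recoverable \readop{} and \writeop{}. I would augment every shared word to hold a pair (value, tag), where the tag $=$ (process id, sequence number) uniquely identifies the operation that last installed the word. Each process keeps, in NVM, its program counter and a descriptor of the single in-flight operation (expected value, new value, sequence number). A \writeop{} installs (new, myTag), flushes, and fences; a \readop{} reads, flushes the word, fences, then returns, so that flush-on-read guarantees any observed value is persisted before it is acted upon; a \cas{} reads the current pair, and if the value matches the expected value installs (new, myTag), flushes, and fences. Each of these is a constant number of base instructions, which immediately gives constant computation delay.

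The subtle step is making the \cas{} idempotent and its outcome recoverable in $O(1)$. I would impose a \notify{}-before-overwrite discipline: before any process overwrites a word carrying tag $t$, it first persistently notifies the owner of $t$ that the operation with tag $t$ succeeded. Combined with flush-on-read and flush-after-modify, this yields the invariant that whenever a CAS has successfully installed its tag, either the word still carries that tag or the owner has been notified of success; conversely, if a process's value was never persisted, then no other process observed it, so redoing the operation on recovery is harmless. On recovery a process reads its descriptor, inspects the notification slot and the current word, and decides the outcome of its in-flight CAS exactly once before advancing its program counter; this is a constant amount of work and reads a constant amount of persistent state, giving constant recovery delay. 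I would then prove by a linearization argument that, with these rules, the simulated history is durably linearizable to the original program's semantics.

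The main obstacle I expect is exactly the idempotence and ABA-safety of the recoverable CAS: after a crash a process must determine the precise success or failure of its in-flight CAS using only $O(1)$ persisted information, even though the target word may have been changed arbitrarily many times, including back to its expected value, by other processes in the interim. Tagging values with unique operation identifiers, persisting the success decision via \notify{} before the installed value can be overwritten, and showing that each logical CAS is decided exactly once and never double-applied by concurrent helpers is the crux; the delay bounds then follow essentially by inspection once this building block is in place.
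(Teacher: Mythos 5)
Your treatment of Reads and CASes is essentially the paper's construction: single-instruction checkpointing of the program counter plus an in-flight operation descriptor, a recoverable CAS that tags each installed value with a (process id, sequence number) pair, a notify-before-overwrite discipline recorded in a per-process announcement slot, and an $O(1)$ recovery that inspects the descriptor, the announcement slot, and the current word. The paper additionally insists that this recovery be callable on operations that had already \emph{completed} before the crash (strict linearizability rather than NRL), which your ``decide the outcome exactly once before advancing the program counter'' step implicitly needs and gets from the same invariant, so that part is fine.

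The genuine gap is in your handling of Writes. You have a write ``install (new, myTag), flush, fence,'' but your own notify-before-overwrite discipline forces the writer to first read the current tag in order to notify its owner, and a read followed by a plain store is exactly the race the paper isolates: between your read and your store, another process can install a value (by CAS or by another write) whose owner you will never notify, and symmetrically your own installation can be overwritten by a process that read the tag \emph{before} yours was in place. After a crash such a writer cannot decide from $O(1)$ persisted state whether its write took effect and was observed, so it cannot safely decide whether to redo it. Replacing the write by a CAS does not always preserve semantics either (a write that races with a CAS on the same location must win, whereas the CAS-for-write would fail). The paper closes this by a separate construction (Section~\ref{sec:writecas}): an indirection layer, based on Aghazadeh, Golab and Woelfel's writable CAS objects, that implements $M$ writable CAS objects from $O(M+P^2)$ plain CAS objects with constant computation delay, so that racy writes and CASes land on distinct low-level locations and every remaining write can be replaced by a recoverable CAS. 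Your proposal needs this (or an equivalent) ingredient; without it the claim does not cover Writes. A secondary omission: constant recovery delay also requires that the \emph{local} state needed to continue (stack variables, not just the PC and one descriptor) be checkpointed atomically at each step; the paper does this with two persistent copies per stack variable and a validity mask written atomically together with the new program counter, and some such mechanism should appear in your argument.
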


We assume the Parallel Persistent Memory (PPM)
model~\cite{blelloch2018parallel,attiya2018nesting}. The model
consists of P processors, each with a fast local ephemeral memory of
limited size, and sharing a large persistent memory. The model allows
for each processor to fault and restart (independently or
together). On faulting all the processor's state and local ephemeral
memory are lost, but the persistent memory remains. On restart, each
processor has a location in persistent memory that points to a context
from which it loads its registers, including program counter, and
restarts. Results in this model also apply if all processors share the
ephemeral memory as a shared
cache~\cite{cohen2018inherent,cohen2017efficient,friedman2018persistent}---although
in this case it only makes sense if all processors fail together.
Throughout the paper we will use \emph{private cache model} to refer
to the PPM model, and \emph{shared cache model} for the shared cache
variant.

The \generalSim{} is achieved by using a technique called \emph{capsules} \cite{blelloch2018parallel}, which in effect introduces checkpoints on a per processor basis from which the program continues after recovering from a crash. In general, the more capsules there are in a program, the smaller the size of each capsule, and therefore, recovery time decreases. The idea behind the \generalSim{} is to show that we can have constant sized capsules, and that each capsule can be implemented with constant overhead.

%In this paper, we present a general transformation that takes any concurrent algorithm that uses Read, Write and CAS operations on shared memory, and makes it persistent.
%This transformation is applicable not only to data structures, but to entire programs as well. Applying the transformation is easy; no knowledge of the semantics of the program is required. Furthermore, our transformation preserves the structure of the original algorithm, replacing each step with a constant number of instructions in the transformed algorithm. The new instructions are guaranteed to be no more contended than the original one.
%thereby keeping properties like low contention and disjoint-access parallelism, which can greatly affect the performance of concurrent programs.

In practice, crashes are relatively rare. It is thus important to minimize the computation delay introduced by a persistent simulator, even at the cost of increased recovery time. In the rest of the paper, we therefore present optimizations that can be applied to the simulator to decrease the computation delay.
The first such optimization is just as general as the \generalSim{} in that it applies to any concurrent program. The difference is that we use fewer capsules; we show where boundaries between capsules can be removed, creating capsules that are larger (not necessarily constant sized), to arrive at a smaller computation delay but a larger recovery delay.
The second optimization applies to a large class of lock-free data structures called \emph{normalized data structures}~\cite{timnat2014practical}. In this setting, we show how to further reduce the number of capsules and thus the computation delay.
The idea behind capsules is that the program is broken up into contiguous chunks of code, called \emph{capsules}. Between every pair of capsules, information about the state of the execution is persisted. When a crash occurs in the middle of a capsule, recovery reloads information to continue the execution from the beginning of the capsule. This means that some instructions may be repeated several times. Blelloch \textit{et al.}~\cite{blelloch2018parallel} noted the need for \emph{idempotent} capsules, that is, capsules that are safe to repeat, and gave sufficient conditions to ensure that sequential code is idempotent. However, repetitions of concurrent code can be even more hazardous, as other processes may observe changes that should not have happened. We therefore formalize what it means for a capsule to be correct in a concurrent setting, and show how to build correct concurrent capsules. % that are guaranteed to be correct. 
%Moreover, we provide details on how capsule boundaries can be safely implemented in practice. This is non-trivial, since ideally, several pieces of data should be persisted atomically at a capsule boundary, to prevent inconsistencies if a crash happens during the transition from one capsule to the next. We show that by toggling between two locations to persist the boundaries to, we can avoid these problems. A similar technique was used in \cite{blelloch2018parallel,correia2018romulus}.

To build correct capsules for general concurrent programs, we must be able to determine whether a modification of a shared variable can safely be repeated. This can be problematic, because often, the execution of a process depends on the return value of its accesses to shared memory. A bad situation can occur if a process has already made a persistent change in shared memory, but crashed before it could persist its operation's return value~\cite{blelloch2018parallel}. Attiya \textit{et al.}~\cite{attiya2018nesting} consider this problem and define \emph{nesting-safe recoverable linearizability (NRL)}, a correctness condition for persistent objects that allows them to be safely nested. In particular, Attiya \textit{et al.}~\cite{attiya2018nesting} introduced the \emph{recoverable CAS}, a primitive which ensures that if a compare-and-swap by process $p$ has successfully changed its target, this fact will be made known to $p$ even if a crash occurs.  

%Guy : moved the following sentence to the beginning of the paragraph to make it stand
%out better.
At a high-level we show how to combine the ideas of capsules with a recoverable CAS in a careful way to achieve our results for programs with shared reads and CASes.  To make the simulation work with writes requires some additional ideas.
We use a modification of this recoverable CAS primitive in our capsules to ensure that a program can know whether it should repeat a CAS. We show that the recoverable CAS algorithm satisfies a stronger property that NRL, allowing the recovery to be called even if a crash occurs after the operation has terminated. This property is very important for use in capsules, because when a crash occurs, all operations of the capsule must be recovered, rather than just the most recent one.  
%We also introduce a primitive that builds on the recoverable CAS and allows Write operations as well as CAS and Read, to be safely executed and recovered. 

%We present our transformation in the \emph{private cache} model, in which all communication between processes is assumed to occur directly on persistent memory. This is the model used in \cite{attiya2018nesting,blelloch2018parallel}. This model highlights the concerns involved in ensuring detectability (the ability to continue an execution), and abstracts away the need to handle data being persisted out-of-order because of automatic cache evictions of shared variables. The private cache model could therefore also be applied to future machines that may not be cache coherent~\cite{lu2014loose,choi2010denovo}.
%In Section~\ref{sec:model}, we include a brief discussion of how to translate algorithms presented in the private cache model to cache coherent machines.

We test our simulations by applying them to the lock-free queue of Michael and Scott~\cite{michael1996simple}, and comparing their performance with two other state-of-the-art implementations: one using the transactional memory framework Romulus~\cite{correia2018romulus}, and the other a hand-tuned detectable queue, known as the LogQueue~\cite{friedman2018persistent}. Because of their generality, we do not expect general constructions to match the performance of specialized implementations. 
Indeed, the LogQueue sometimes outperforms our transformations, but only by about a factor of $1.19$x on $8$ threads; our most optimized transformation even outperforms the LogQueue on lower thread counts. In comparison, the original MichaelScott queue is between $3.3$x to $1.7$x faster than the LogQueue, showing that the inevitable cost of persistence outweighs the extra cost paid for generality by our transformations.
We further show that our simulations, even the least optimized ones, outperform Romulus for the queue construction.
%\Naama{Check numbers.}
%Since Romulus is designed for transactions, it is optimized for situations with larger transactions and less contention. However, our results show that 
%We show experiments comparing the performance of our transformations to other state-of-the-art persistent algorithms. 

In summary, the contributions of our paper are as follows.
\begin{itemize}
	\item We define persistent simulations, which consider computation and recovery delay, thereby providing a measure of how faithful a simulation is to the original program, and how fast it can recover from crashes.
	\item We present a constant-computation and constant-recovery delay simulation that applies to any concurrent program.
	%\item We define constant-delay simulations that preserve the structure of a simulated algorithm, and show that our transformations satisfies this property.
	\item We show optimized simulations that trade-off computation delay for recovery delay, both for general programs and for normalized data structures.
	%We observe that in the failure-recovery model, the context in which a data structure is used can be important in the design of that data structure.
	\item We show that our transformations are practical by comparing them experimentally to state-of-the-art persistent algorithms.
\end{itemize}

\section{Preliminaries}\label{sec:model}

\subsection{Model}
We use the Parallel Persistent Memory (PMM)
model~\cite{blelloch2018parallel,attiya2018nesting}.  It consists of a
system of $n$ asynchronous processes $p_1 \ldots p_n$. Each process
has access to an unbounded \emph{persistent} shared memory that it may
access with Read, Write, and compare-and-swap (CAS) instructions, as
well as a smaller private \emph{volatile} memory, which can be
accessed with standard RAM instructions.  A process can \emph{persist}
the contents of its volatile memory by writing them into a persistent
memory location.  The volatile memory is explicitly managed; it does
not behave like a cache, in that no automatic evictions occur.

Each process may \emph{crash} at any time. Upon a crash, the contents
of a process's private volatile memory is lost, but the persistent
memory remains unchanged.  After a process crashes, it
\emph{restarts}.  On restart, the volatile memory can be in an
arbitrary state, but the persistent memory is in the same state as
immediately before the crash.  To allow for a consistent restart, each
processor has a fixed memory location in the persistent memory
referred to as its \emph{restart pointer location}.  This location
points to a context from which to restart (i.e., a program counter and
some constant number of register values).  On restart this is reloaded
into the registers, much like a context switch.  Processors can
checkpoint by updating the restart pointer.  Furthermore a process can
know whether it has just crashed by calling a special
\texttt{crashed()} function that returns a boolean flag, and resets
once it is called.  We call programs that are run on a processor of
such a machine \emph{persistent programs}.

A similar model for non-volatile memories allows a shared cache and
automatic cache
evications~\cite{cohen2018inherent,cohen2017efficient,friedman2018persistent}.
In this model only the whole system can fail.  In
Section~\ref{sec:practical}, we discuss the difference between the
models in more detail, and point out that all our results are also
valid on the shared cache variant.  Throughout the paper we will use
\emph{private cache model} to refer to the PPM model, and \emph{shared
  cache model} for the shared cache variant.

\subsection{Definitions}
For algorithms that implement concurrent objects, we define executions as follows.
An \emph{execution}, $E$, involves three kinds of \emph{events} for each process $p_i$ in the system; \emph{invocation} events $I_i(op, obj)$, which invoke operation $op$ on object $obj$, \emph{response} events $R_i(op, obj)$, in which object $obj$ responds to $p_i$'s operation, and \emph{crash} events $C_i$. Crash events are not operation- or object-specific. On a crash event, $p_i$ loses all information stored in its volatile variables (but all shared objects remain unaffected). 
%When $p_i$ recovers from a $C_i$ event, $p_i$ can know that it has just crashed by reading a special $crashed_i$ flag, but does not know where it was in its execution before the crash.
A process $p$ takes \emph{steps} in an execution, which constitute atomic accesses to base objects, and together make an \emph{implementation} of the \emph{high-level operations} represented by the invocation and response events of the execution. In this paper, we sometimes refer to steps as \emph{(low-level) instructions}.
%At any time in an execution, a $crash_i$ event may occur, in which $p_i$ loses all information stored in its volatile variables (but all shared objects remain unaffected). When $p_i$ recovers from a $crash_i$ event, $p_i$ can know that it has just crashed by reading a special $crashed_i$ flag, but does not know where it was in its execution before the crash.

A linearizable data structure is said to be \emph{durably linearizable} if at any time, the state of the data structure in persistent memory is consistent with a linearization of the execution up to that point~\cite{izraelevitz2016linearizability}. This means that regardless of when a crash happens, the state in memory remains consistent. A data structure is said to be \emph{detectable}~\cite{friedman2018persistent} if the operations and their return values are persisted, thus allowing a process to recover and continue its execution after a crash.

	\subsection{Capsules}
\label{sec:definition}
Our goal is to create concurrent algorithms that are persistent and can recover their execution after a crash. The main idea in achieving this is to periodically persist \emph{checkpoints}, which record the state of the execution at the time they are persisted, and from which we can continue our execution after a crash. We call the code between any two consecutive checkpoints a \emph{capsule}, and the checkpoint itself a \emph{capsule boundary}. 
%\Naama{At a boundary, we persist all information needed to continue the execution from this point upon recovering from a crash. In particular, this means persisting all registers and stack allocated variables.}
At a boundary, we persist enough information to continue the execution from this point when we recover from a crash.  Usually,
this means persisting the program counter along with values
in the registers (or stack frame) necessary to restart, and then atomically
setting the restart pointer to point to this information.
This approach was use by Blelloch \textit{et al.} in \cite{blelloch2018parallel} and similar to approaches by others~\cite{de2011idempotent,lucia2015simpler,colin2018termination}.   We say that an \emph{encapsulation} of a program is the placement of such boundaries in its code to partition it into capsules.

\textbf{Capsule Implementation.}
We now briefly discuss how capsule boundaries are implemented.

Recall that stack-allocated local variables, as well as the program
counter, are updated in volatile memory, and their new values must be
made persistent at each capsule boundary.  We therefore keep a copy of
the stack in persistent memory.  We add to each stack frame a program
counter that holds the program location of the last capsule boundary.
Furthermore each stack frame maintains two persistent copies of each
stack-allocated variable, as well as a bit indicating which of the
copies is currently valid. All these bits are kept together in a
single word as a validity mask, and this word is updated at the end of
a capsule boundary to atomically indicate which copy of each variable
is valid.  We therefore assume that each stack frame contains a
constant number of variables; in particular, that the number of such
variables is not more than the number of bits in a single word.

Heap-allocated variables, including dynamically allocated objects like
arrays, are placed immediately in persistent memory, and are handled
differently to preserve idempotence (by avoiding write-after-read
conflicts as shown in Section~\ref{sec:writecas-capsules}).
Any non-constant sized data can be allocated on the heap.

At a capsule boundary within a function, we update the values of the
stack-allocated variables that have been changed during the previous
capsule.  To do so, for each such variable, we check its validity bit,
and overwrite its \emph{outdated} copy in persistent memory. After
overwriting the outdated copies of each changed variable, we
atomically move to the next capsule by writing out a new validity mask
(bits of changed variables are flipped) along with the new program
counter.  We assume the bits and counter fit in one word (or
atomically writable object).

When making or returning from a function call we simply change the
restart pointer to point to the appropriate stack frame.  This way
when a processor crashes and restarts, it will restart from the
previous capsule boundary.

\textbf{Capsule Correctness.}
When executing recoverable code that is encapsulated, it is possible for some instructions to be repeated. This happens if the program crashes in the middle or a capsule, or even at the very end of it before persisting the new boundary, and restarts at the previous capsule boundary. 
To be able to reason about the correctness of encapsulated programs after a crash, we define what it means for a capsule to be \emph{correct} in a concurrent setting, intuitively meaning that it can be repeated safely.

\begin{definition}
	An instruction $I$ in an execution history $E$ is said to be \emph{invisible} if $E$ remains legal even when $I$ is removed.% from it.
\end{definition}

\begin{definition}
	\label{def:caps_llcorrect}
	A capsule $\mathcal C$ inside algorithm $\alg$ is \emph{correct} if:
	\begin{enumerate}
		\item Its execution does not depend on the local values of the executing thread prior to the beginning of the capsule, and
		\item For any execution $E$ of $\alg$ in which $\mathcal C$ is restarted from its beginning at any point during $\mathcal{C}$'s execution an arbitrary number of times, there exists a set of invisible operations performed by $\mathcal C$ such that when they are removed, $\mathcal C$ appears to only have executed once in the low-level execution history.
	\end{enumerate}
\end{definition}

\begin{definition}
	A program is \emph{correctly encapsulated} if all of its capsules are correct.
\end{definition}

	\section{$k$-Delay Persistent Simulations}

%Universal constructions are another method of achieving persistence for any algorithm. However, universal constructions are fundamentally different from general transformations; for one thing, a universal construction takes a sequential algorithm as its base, whereas a transformation starts with a concurrent algorithm.

%More importantly, our general transformation does something more; intuitively, it preserves the structure of the base concurrent algorithm. This is important, because t

The performance of concurrent algorithms heavily depends on factors like contention, disjoint access parallelism, and remote accesses. While these factors are difficult to theoretically characterize, they are monumental in their effect on the algorithm's performance. Therefore, when creating algorithms for the persistent setting, it is important to be able to preserve the structure of tried and tested efficient concurrent algorithms to their persistent counterparts.

We formalize the notion of `preserving the structure of an algorithm' with the definition of a \emph{$k$-computation-delay simulation}. Intuitively, an algorithm $A$ is a $k$-computation-delay simulation of another algorithm $A'$ if, in a setting without crashes, $A$ behaves the same as $A'$, but has at most a $k$-factor slowdown per instruction.
%To formalize this notion, we first define what it means for an object $O$ to be \emph{multi-implemented}.
%  implementation of an object to be a \emph{multi-implementation}
%
%\begin{definition}
%	A \emph{multi-implementation} of a concurrent object $O$ is the specification of at least one algorithm for the execution of each of $O$'s operations.
%\end{definition}
%
%Normally, an implementation of an object $O$ provides exactly one way to execute each of $O$'s operations. Furthermore, it is only guaranteed to be correct if all processes accessing $O$ follow the same implementation. 
%However, in a multi-implementation, processes have a choice of one of several ways to execute an operation. As long as all processes always use one of the options provided, the object is correctly implemented.
%This provides extra flexibility when simulating another algorithm. With the use of multi-implementations, a simulation of an algorithm can do more than just re-implement each base object.
%We are now ready to define $k$-delay simulations.

\begin{definition}
	A concurrent program $A$ is a \emph{$k$-computation-delay simulation} of another concurrent algorithm $A'$ if $A$ follows the same steps as $A'$, but replaces each instruction $I$ of $A'$ with an implementation of $I$ that takes at most $k$ steps.
	%\Naama{Use instructions instead of base objects.}
\end{definition} 

Note that the definition includes local instructions as well as accesses to base objects, and allows for several different implementations of each instruction $I$ to be used, as long as correctness is preserved. That is, as long as each step of $A'$ is replaced in $A$ with a simulation that has the same effect as the original step. 
%The base objects of $A'$ include local operations. 
For an algorithm $A$ to be considered a $k$-computation delay simulation of $A'$, $A$ must have at most a $k$-factor more local operations as well as shared ones. This distinction strengthens the notion of $k$-computation-delay simulations, by disallowing simulations to `trade-off' shared operations for local ones.

We now introduce some terminology that will help us discuss such simulations. We refer to $A$ as the \emph{simulation} algorithm, and $A'$ as the \emph{original} algorithm. 
%Furthermore, we call the implementation of a base object $O$ the \emph{simulated object $O_{sim}$}. 
For clarity, we distinguish between the \emph{base objects} or \emph{instructions} of the original algorithm and the \emph{primitive objects} of the simulation algorithm, which are the atomic objects used in the implementations of the simulated base objects. 
Each execution $E$ of the simulation algorithm $A$ \emph{maps} to the set of executions of the original algorithm $A'$ in which the accesses to $A'$'s base objects respect the partial order of accesses to these objects in $E$. 
%are in the order in which $A$'s simulated object accesses linearized. 
We denote the set of executions that $E$ maps to by $E'_M$.

A stronger notion of a $k$-computation-delay simulation is one in which the amount of contention experienced by an algorithm cannot grow by more than a factor of $k$ either. Accounting for contention helps to capture the structure of an algorithm, since scalability is highly associated with keeping contention as low as possible on all accesses. 
To be able to discuss contention formally, we follow the definition of contention presented by Dwork \textit{et al.} in \cite{dwork1997contention}; the amount of contention experienced by an operation $op$ on object $O$ is the number of \emph{responses} to operations on $O$ received between the invocation and response of $op$\footnote{For atomic accesses, which don't have invocations and responses, we account for contention pessimistically; we assume that an invocation of a process's next step happens immediately after the end of its previous step.}. 
We now extend the definition of $k$-computation-delay simulations with contention taken into account as follows.

\begin{definition}
	A concurrent algorithm $A$ is a \emph{$k$-contention-delay simulation} of another concurrent algorithm $A'$ if 
	\begin{enumerate}
		\item $A$ is a $k$-computation-delay simulation of $A'$, and 
		\item for every execution $E$ of $A$, if an operation $op$ experiences $k*C$ contention in $E$, then there is an execution $E' \in E'_M$ in which the corresponding base object access experiences at least $C$ contention.
	\end{enumerate}
	%if there is an execution of $A$ that experiences $C$ contention then there exists an execution of $A'$ that experiences at least $C/k$ contention.
	%each access to $O'$ in $A$ experiences at most $k*C$ contention where $C$ is the amount of contention experienced in $A'$ on object $O'$.
	%\Naama{Change this definition to talk about executions.}
\end{definition} 

%It is easy to see that a universal construction that sequentializes all operations of a data structure cannot be a $c$-delay simulation of any algorithm that implements the data structure in less than $O(P)$ where $P$ is the number of processes participating. Note that in some universal constructions, very few shared operations are done; processes might simply make all data structure changes locally, and only execute a constant number of shared operations to agree with others on the resultant update. However, since we count local operations in the definition of a $k$-delay simulation, this approach does not hide costs. %\Naama{Explain about local operations.}

Persistent algorithms are tightly coupled with their recovery
mechanisms. When discussing an algorithm for a persistent setting, it
is important to also discuss how it recovers from crashes. Note that,
if all processes crash together during a system crash, simply running
a concurrent program as is in a persistent setting yields a trivial
$1$-computation-delay simulation of itself; all steps of the program
remain exactly the same. However, upon a crash, the entire program has
to be restarted, and all progress is completely lost. Thus, the
recovery time of this `simulation' is unbounded; it grows with the
length of the execution. We therefore also formalize the notion of a
\emph{recovery delay}; how long it takes for a persistent program on
any process to recover from a crash (processes can crash
independently).  Note that we consider a program to have `recovered'
when it reaches the point of the computation it was at before the
crash, and can continue the execution from there.

\begin{definition}
  A persistent program has \emph{$k$-recovery delay} if, regardless of
  the point $\rho$ of the execution at which the process on which the
  program is running crashed, the recovery on that process takes at
  most $k$ steps to arrive at $\rho$ again in a state at which the
  execution may be continued.
\end{definition}

Note that the notion of recovery delay applies to any persistent program, regardless of whether or not it is a simulation of another program. In contrast, computation delay applies to simulations, even if they are not persistent themselves.
In this paper, we consider persistent simulations that have small computation delay and small recovery delay. We say that an algorithm is \emph{X-delay free} or  if it has $c$-X-delay for a constant $c$, where $X$ is one of the types of delay we defined (computation, contention, or recovery).

	\subsection{Recoverable Primitives}
\label{sec:dcas}
	%We now discuss algorithms that make use of capsules to become detectable. However, there is a problem; not every algorithm admits a correct encapsulation, without having to change its code beyond just adding capsule boundaries.
	%In particular, using regular CAS in a capsule can be dangerous, because there is no way to recover the return value after a crash. That is, if a crash happens after the CAS is applied to persistent memory and before the return value is persisted, then we cannot know if the CAS has occurred or not once we restart. 
 
 %\Naama{Mention the constant recovery version as well, since we need it for the "constant recovery is possible" theorem.}
	One problem that arises from volatile registers and caches is that the return values of atomic operations can be lost. For example, consider a CAS operation that is applied to a shared memory location. It must atomically read the location, change it if necessary, and return whether or not it succeeded. Return values are stored in volatile registers. If a crash occurs immediately after a CAS is executed, the return value could be lost before the process can view it. When the process recovers from the crash, it has no way of knowing whether or not it has already executed its CAS. This is a dangerous situation; repeating a CAS that was already executed, or skipping it altogether, can render a concurrent program incorrect. In fact, any primitive that changes the memory suffers from the same problem.
	
	This issue was pointed out by Attiya \textit{et al.} in \cite{attiya2018nesting}. To address the problem, they present several recoverable primitives, among them a \emph{recoverable CAS algorithm}. This algorithm is an implementation of a CAS object with three operations: read, CAS, and recover. The idea of the algorithm is that when CASing in a new value, a process writes in not only the desired value, but also its own ID. Before changing the value of the object, a process must \emph{notify} the process whose ID is written on the object of the success of its CAS operation. The recovery operation checks this notification to see whether its last CAS has been successfully executed. Attiya \textit{et al.} show that their recoverable CAS algorithm satisfies \emph{nesting-safe recoverable linearizability (NRL)}, intuitively meaning that as long as recovery operations are always run immediately after crashes, the history is linearizable.
	Attiya \textit{et al.}'s algorithm uses classic CAS as a base object, and assumes that CAS operations are ABA-free, meaning that the same value is never written to the same CAS object twice.
	In particular, this disallows successful CAS operations with the same expected and new value.
	This is easy to ensure by using timestamps. 
	
%	We now present our own recoverable CAS algorithm\footnote{This algorithm was developed concurrently with Attiya et al.}. The main idea is very similar to that of the recoverable CAS algorithm of \cite{attiya2018nesting}. The main advantage of our algorithm is that while the algorithm of Attiya \textit{et al.} uses $O(P^2)$ space for $P$ processes, and requires $O(P)$ steps for recovery, ours uses only $O(P)$ space, and has \emph{constant} recovery time. This is achieved by replacing each process's notification array with a single notification slot, and carefully ensuring that new values never get overwritten. 
%	Both our algorithm and that of Attiya \textit{et al.} use classic CASes as base objects, and assume that CAS operations are ABA-free. This is easy to ensure by using timestamps. 
	%Furthermore, they both use the ids of processes to aid in notification. 
	%Our CAS operation takes a sequence number, which acts as a timestamp for the operation, and helps avoid overwriting new notifications. 
	
	It turns out Attiya \textit{et al.}'s recoverable CAS algorithm satisfies strict linearizability \cite{aguilera2003strict}, a stronger correctness property than NRL. The main difference between the two properties is that while NRL only allows recovering operations that were pending when the crash happened, strict linearizability is more flexible. This means that we can define the recovery function to work even on operations that have already completed at the time of the crash. This property is very important for use in the transformations provided in the rest of the paper, in which we may not know exactly where in the execution we were when a crash occurred. 
	To satisfy strict linearizability, we need to tweak the recoverable CAS algorithm slightly, to include the use of sequence numbers on each CAS.
	In contrast to Attiya \textit{et al.}, we treat the recovery function as another operation of the recoverable CAS object, whose sequential specification is as follows.
	
\smallskip
		Each \recover{}($i$) operation $R$ returns a sequence number $seq$ and a flag $f$ with the following properties:
		\begin{itemize}
			\item If $f = 1$, then $seq$ is the sequence number of the last successful \cas{} operation with process id $i$.
			\item If $f = 0$, all successful \cas{} operations before $R$ with process id $i$ have sequence number less than $seq$.
		\end{itemize}

	% ** Assumptions, throughtout this section, we will assume that \cas{} is always used in an ABA free manner and that for each process, the sequence numbers passed to \cas{} are non-deccreasing. After a \cas{} is successful, that sequence number will not be used again. We further assume that process $p_i$ always calls \cas{} and \recover{} with the process id $i$.  

%	We now present a wait-free algorithm for a recoverable CAS. The algorithm uses a `notify' mechanism to ensure that a process that executed a successful CAS is made aware of this before the value of the object is overwritten. For this purpose, we have an announcement array $A$ with a slot for each process to receive notifications. 
%	Our algorithm uses regular CAS as its base object. 

	%Note that this does not create a circular problem, since our algorithm does not actually depend on the return values of its executed CAS instructions.
	
	%The pseudocode of the tweaked recoverable CAS algorithm is provided in the appendix, in Algorithm \ref{alg:rcas}.
	We also further modify the recoverable CAS algorithm to create a version that has constant recovery time (instead of $O(P)$), and uses less space ($O(P)$ instead of $O(P^2)$). The pseudocode of our version of the recoverable CAS is given in Algorithm~\ref{alg:reccas}. This code also shows the tweaks that we do to the original recoverable CAS algorithm of Attiya \textit{et al.}.
	Also, in the full version of the paper~\cite{full}, we give a more detailed description of how it works, and prove it satisfies strict linearizability. Theorem~\ref{thm:reccas} summarizes the result.

	\renewcommand{\figurename}{Algorithm}
	\begin{figure}[!t!h]
	\caption{Recoverable CAS algorithm}
	\begin{lstlisting}[linewidth=\columnwidth]
class RCas {
	<Value, int, int> x;  //shared persistent
	<int, bool> A[P];     //shared persistent

	Value Read(){
		<v, *, *> = x; @\label{line:rread}@
		return v;}

	bool Cas(Value a, Value b, int seq, int i){
		<v, pid, seq'> = x;      // notify @\label{line:rcas_read}@
		if(v != a) return false;  @\label{line:rcas_check}@
		CAS(A[pid], <seq', 0>, <seq', 1>); @\label{line:rcas_cam}@
		A[i] = <seq, 0>;        // announce @\label{line:rcas_ann}@
		return CAS(x,<a, pid, seq'>,<b, i, seq>);} @\label{line:rcas_cas}@

	<int, bool> Recover(int i){
		<v, pid, seq'> = x;      // notify @\label{line:r_r1}@
		CAS(A[pid], <seq', 0>, <seq', 1>); @\label{line:r_r2}@
		return A[i];}}
	\end{lstlisting}
	\label{alg:reccas}
	\end{figure}
	\renewcommand{\figurename}{Figure}

	\begin{theorem}\label{thm:reccas}
		Algorithm \ref{alg:reccas} is a strictly linearizable, contention-delay free and recovery-delay free implementation of a recoverable CAS object.
	\end{theorem}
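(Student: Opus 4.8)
The plan is to treat the three claimed properties separately, as they are largely independent. \emph{Recovery-delay freedom} is immediate: \recover{} (lines~\ref{line:r_r1}--\ref{line:r_r2}) reads $x$, performs one conditional \cas{} on $A$, and reads $A[i]$, so it terminates in a constant number of steps no matter where the crash occurred; together with the capsule machinery restoring a constant-size context, this gives $O(1)$ recovery delay. \emph{Contention-delay freedom} is almost as direct: each \readop{}, \cas{}, and \recover{} touches the shared words $x$ and $A[\cdot]$ only a constant number of times, so every primitive access charges to the corresponding access of the atomic recoverable CAS object, and the number of responses falling inside any operation's interval can grow by at most a constant factor. The bulk of the work is the \emph{strict linearizability} argument, which I sketch next.

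I would first fix linearization points: a \readop{} and a value-mismatch \cas{} (one failing the test on line~\ref{line:rcas_check}) linearize at their read of $x$ on line~\ref{line:rread}/\ref{line:rcas_read}; a \cas{} reaching line~\ref{line:rcas_cas} linearizes at that instruction, whether it wins or loses the race; and \recover{}($i$) linearizes at the read of $A[i]$ it returns. Using the assumed ABA-freedom of the original object (unique, timestamped values), I would show the value component of $x$ never repeats, so the triple-\cas{} on line~\ref{line:rcas_cas} succeeds exactly when a value-based CAS would, and the appended stamp $\langle pid, seq\rangle$ serves only the detection machinery. From this I would also derive that the stamp on $x$ is strictly monotone, each successful \cas{} owns a unique pair $\langle i, seq\rangle$, and the sequence numbers of process $i$'s successful operations strictly increase; real-time ordering and consistency of \readop{}/\cas{} with the CAS sequential specification then follow routinely.

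The heart of the argument, and the step I expect to be hardest, is the correctness of \recover{} together with the strict-linearizability crash barrier. I would prove by induction over the execution the invariant: once a \cas{} by $i$ with sequence number $seq$ has installed $\langle b, i, seq\rangle$ into $x$, then until $i$'s next announcement on line~\ref{line:rcas_ann} either $x$ still carries the stamp $\langle i, seq\rangle$ or $A[i] = \langle seq, 1\rangle$. This holds because the only way to erase $i$'s stamp is a winning \cas{} on line~\ref{line:rcas_cas} by some $j$, and $j$ first runs the notify \cas{} on line~\ref{line:rcas_cam}, which by ABA-freedom flips precisely $A[i]$ from $\langle seq,0\rangle$ to $\langle seq,1\rangle$ before the stamp is removed; the announce on line~\ref{line:rcas_ann} that could reset $A[i]$ belongs only to a strictly later operation of $i$. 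Given the invariant, \recover{}($i$) returns $f=1$ iff $i$'s last attempted \cas{} succeeded: if the stamp is still present the notify step inside \recover{} (line~\ref{line:r_r2}) sets $A[i]=\langle seq,1\rangle$ itself, and otherwise some earlier notifier already did. Crucially, the notification is durably in place \emph{before} the stamp vanishes, so a crash landing immediately after the winning \cas{} cannot conceal the success, which is exactly the strict-linearizability demand that an operation whose effect outlives a crash be linearized before it. I expect the delicate points to be (i) soundness of notifications, i.e.\ that line~\ref{line:rcas_cam}/\ref{line:r_r2} only ever confirm stamps genuinely sitting in $x$, so \recover{} never reports a false success, (ii) excluding spurious flips for stale sequence numbers, which ABA-freedom of the stamp rules out, and (iii) checking the invariant survives the announce/notify interleavings when several processes observe the same stamp concurrently.
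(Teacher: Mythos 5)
Your proposal is correct and takes essentially the same route as the paper's proof: the same linearization points, the same reduction of \readop{}/\cas{} correctness to ABA-freedom of the underlying values, and the same notify-before-overwrite argument for \recover{} (your invariant that after a successful \cas{} by $i$ either the stamp $\langle i, seq\rangle$ is still on $x$ or $A[i]=\langle seq,1\rangle$ is exactly the paper's key claim, recast in invariant form). The only slip is the phrase that $f=1$ iff $i$'s last \emph{attempted} \cas{} succeeded --- if an attempt fails its value check before announcing, $A[i]$ may still report an older success, and the specification is stated in terms of the last \emph{successful} \cas{} precisely to accommodate this --- but the invariant you actually prove establishes the correct statement, so this is a matter of phrasing rather than a gap.
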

%
%	\begin{proof}
%		From a quick inspection of the code, we see that each operation performs a constant number of steps in Algorithm \ref{alg:rcas}, so it is wait-free.  The rest of this theorem follows directly from Lemma \ref{lem:rcas_line}. 
%	\end{proof}

The problem of recoverability also applies to atomic write operations; if a process $p$ executes a write that may be seen and possibly overwritten by other processes, it is important that $p$ never repeat its write after a crash, since this can cause an inconsistent state.
To handle shared write operations, we reduce the problem to shared CAS operations, and then use the recoverable CAS primitive presented above.
We note that often, a shared write can be replaced with a CAS with no effect on the algorithm; that is, an algorithm $A$ that uses CASs and Writes can be simulated by algorithm $A_s$ in which each write operation is implemented with a single CAS. If the CAS fails, this is treated as if the simulated write succeeded, but was overwritten before any other process saw the value.
However, in some cases, replacing a write with a CAS does not have the same effect. Intuitively, this could occur if in algorithm $A$, the write races with a CAS operation on the same location; if the write happens before the CAS, then the CAS would fail, and the value of the write would not be overwritten. 

To handle this case, we present an algorithm  that gets rid of races between CAS and write operations on the same location. At a high level, our algorithm does this by adding a level of indirection, leading the racy CAS and Write to actually access different locations. We describe the CAS-Write algorithm in detail in Section~\ref{sec:writecas}, and show that it is computation-delay free.
Once the CAS-Write algorithm is applied, all writes can be replaced with CAS operations, which can be recovered with the recoverable CAS algorithm presented above. Hence, all of our presented simulations apply not only to programs that use CAS and read operations on shared memory, but also to those that additionally use write operations.

\hide{

	More optimized version of recoverable CAS where the \cas{} operations performs one less write. However the trade off is that recovery time and space usage increases. The other benefit of this version is that it can be optimized to only perform a single flush in the case of an initial value CAS.

	\renewcommand{\figurename}{Algorithm}
	\begin{figure}[!t!h]
	\caption{Optimized Recoverable CAS algorithm for process $p_i$}
	\begin{lstlisting}[]
<Value, int, int> x;  // shared persistent variable. 
                   // x = <@$\bot$@, 0, 0> is initial value.
int A[P][P];          // shared persistent variables

Value FR_Read()
{
	<v, *, *> = x;  @\label{line:frread}@
	return v;
}

bool FR_CAS(Value a, Value b, int seq)
{
	<v, pid, seq'> = x;      // notify @\label{line:frcas_read}@
	if(v != a) return false;   @\label{line:frcas_check}@
	if(v != @$\bot$@) A[i][pid] = seq';
	return CAS(x, <a, pid, seq'>, <b, i, seq>); @\label{line:frcas_cas}@
}

int FR_recover()
{
	<v, pid, seq'> = x;      // notify
	if(v != @$\bot$@) A[i][pid] = seq';
	int mx = 0;
	for(int j = 0; j < P; j++)
		mx = max(mx, A[j][i]);
	return <mx, 1>;
}

	\end{lstlisting}	
	\label{alg:frcas}
	\end{figure}
	\renewcommand{\figurename}{Figure}

	Linearization points:

	\begin{definition}
	\label{def:frcas_line}
		\item A $FR\_CAS$ operation that sees $v \neq a$ on line \ref{line:frcas_check} is linearized at line \ref{line:frcas_read}. Otherwise, it is linearized at line \ref{line:frcas_cas}.
		\item A $FR\_READ$ operation is linearized at line \ref{line:frread}.
	\end{definition}

	\begin{theorem}
	\label{thm:frcas_line} 
		Used in an ABA free manner, Algorithm \ref{alg:frcas} is an linearizable, wait-free implementation of a CAS object with linearization points given by Definition \ref{def:frcas_line}.
	\end{theorem}

	\begin{proof}
		From a quick inspection of the code, we see that each operation performs a constant number of steps in Algorithm \ref{alg:rcas}, so it is wait-free.

		For the purpose of the linearizability proof, we can ignore operations on $A$ because they do not affect the return values of $FR\_READ$ or $FR\_CAS$. Without operations on $A$, Algorithm \ref{alg:frcas} performs the exact same steps as algorithm \ref{alg:rcas} so the rest of the proof is the exact same as Theorem \ref{thm:rcas_line}.
	\end{proof}

}

\hide{
	% we need to define the semantics of a recoverable read-write-modify operationn
	% I'm worried that a process might think that a write has occured, but then the recovery function would miss it

	The algorithm can also be used to implement a recoverable read-write register where $\textit{write}_i$ is implemented in the exact same way as $\textit{CAS}_i$. \Hao{Make sure this is actually true}.

	The two constructions above are enough to implement all algorithms that we are aware of but to handle algorithms which do \readop{}, \cas{}, and \writeop{} to the same location, we have the following implementation.

	\renewcommand{\figurename}{Algorithm}
	\begin{figure}[!t!h]
	\caption{Recoverable read-write-CAS algorithm for process $p_i$}
	\begin{lstlisting}[]
	<Value, int, int> x;  // shared persistent variable
	<int, bool> A[P];     // shared persistent variable

	Value RW_Read()
	{
		<v, pid, seq'> = x;      // notify
		CAM(A[pid], <seq', 0>, <seq', 1>);
		return v;
	}

	bool RW_CAS(Value a, Value b, int seq)
	{
		<v, pid, seq'> = x;      // notify
		CAM(A[pid], <seq', 0>, <seq', 1>);
		A[i] = <seq', 0>;        // announce
		return CAS(x, <a, pid, seq'>, <b, i, seq>);
	}

	void RW_write(Value b, int seq)
	{
		<v, pid, seq'> = x;      // notify
		CAM(A[pid], <seq', 0>, <seq', 1>);
		A[i] = <seq', 0>;        // announce
		CAM(x, <v, pid, seq'>, <b, i, seq>);
	}

	bool RW_recover()
	{
		<v, pid, seq'> = x;      // notify
		CAM(A[pid], <seq', 0>, <seq', 1>);
		return A[i] == <seq, 1>;
	}
	\end{lstlisting}
	\label{alg:rwcas}
	\end{figure}
	\renewcommand{\figurename}{Figure}
}

	\section{Persisting Concurrent Programs}
%\section{A Constant Delay Simulation using Capsules}
%\section{Warm-Up: Single-Instruction Capsules}
\label{sec:general}

%\Naama{Move these definitions to the capsule preliminaries subsection?}
One way to ensure that a program is tolerant to crashes is to place a capsule boundary between every two instructions.
We call these \emph{Single-Instruction} capsules. Can this guarantee a correctly encapsulated program?
Even with single-instruction capsules, maintaining the correctness of the program despite crashes and restarts is not trivial.
% requires some care to ensure that crashes cannot hurt the correctness of the program.
In particular, a crash could occur after an instruction has been executed, but before we had the chance to persist the new program counter at the boundary. This would cause the program to repeat this instruction upon recovery.

Trivially, if the single instruction $I$ in a capsule $C$ does not modify persistent memory, then $I$ is invisible, and thus $C$ is correct.
But what if $I$ does modify persistent memory? Recall that we allow non-racy writes to private persistent memory, and any CASes to shared persistent memory. A private persistent write is invisible as well, since the process simply overwrites the effect of its previous operation, and no other process could have changed it in between. %Write-after-read conflicts do not occur, since we never have two instructions in the same capsule. 
So, we only have CASes left to handle. This is where we employ the recoverable CAS operation.

We replace every CAS object in the program with a recoverable CAS.
We show that it is safe to repeat a recoverable CAS if we wrap it with a mechanism that only repeats it if the recovery operation indicates it has not been executed; any repeated CAS will become invisible to the higher level program.
When recovering from a crash, we simply call a \texttt{checkRecovery} function, that takes in a sequence number, and calls the \texttt{Recover} operation of the recoverable CAS object. The  \texttt{checkRecovery} function returns whether or not the CAS referenced by the sequence number was successful. If it was, then we do not repeat it, and instead continue on to the capsule boundary. Otherwise, the CAS is safe to repeat. Pseudocode for the \texttt{checkRecovery} function is given in Algorithm~\ref{alg:checkRecovery}.

With this mechanism to replace CAS operations, single-instruction capsules are correct. The formal proof of correctness is implied by the proof of Theorem~\ref{thm:cap1}, which we show later.

\renewcommand{\figurename}{Algorithm}
\begin{figure}[!t!h]
	\caption{Check Recoverable CAS}
	\begin{lstlisting}[]	
	bool checkRecovery(RCas X, int seq, int pid){
	<last, flag> = X.Recover(pid);
	if (last @$>$@= seq && flag == true) return true;
	else return false;	}	
	\end{lstlisting}
	\label{alg:checkRecovery}
\end{figure}
\renewcommand{\figurename}{Figure}

%However, capsule boundaries can be expensive, as they require flushing local values out into persistent memory. Therefore, we would like to reduce the number of such boundaries we have in a program, and create larger capsules. This brings about another problem; since capsule boundaries are only occasionally persisted, crashing in the middle of a capsule may lead to a situation in which the program re-executes several operations when it recovers, simply because it does not know that they had already been executed. 

%To highlight the difference between our general transformation an universal constructions, we prove the following theorem.
We now show that this transformation applied to any concurrent program $C$ is a constant-contention-delay simulation of $C$.

\begin{theorem}\label{thm:delay}
	For any concurrent algorithm $A$, if $A'$ is the program resulting from encapsulating $A$ using single-instruction capsules, then $A'$ is a $c$-contention-delay, $c'$-recovery-delay simulation of $A$, where $c$ and $c'$ are constants. %\Naama{specify which construction.}
\end{theorem}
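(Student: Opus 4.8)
The plan is to verify the two delay bounds separately, since the theorem asserts that the single-instruction encapsulation $A'$ is simultaneously a $c$-contention-delay simulation of $A$ and a $c'$-recovery-delay program. I would first argue the contention-delay (and hence computation-delay) claim, then the recovery-delay claim. For the contention-delay part, the key observation is that each instruction $I$ of $A$ is replaced by exactly one capsule in $A'$, and I must show that each such capsule executes only a constant number of primitive steps. Local instructions, reads, and private persistent writes are handled directly with $O(1)$ extra bookkeeping (updating the persistent stack copy and flipping the validity mask at the boundary, all of which fit in a constant number of words by the capsule-implementation assumptions). The only interesting case is a shared CAS, which is replaced by a recoverable CAS wrapped in the \texttt{checkRecovery} mechanism of Algorithm~\ref{alg:checkRecovery}. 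Here I would invoke Theorem~\ref{thm:reccas}, which already guarantees that the recoverable CAS is contention-delay free, so each \texttt{Cas}, \texttt{Read}, and \texttt{Recover} call uses only $O(1)$ primitive operations and introduces at most a constant factor more contention than the corresponding base-object access in some $E' \in E'_M$. Combining the per-instruction constant overhead across both local and shared operations gives the $c$-computation-delay bound, and the contention bound then follows from the second clause of the contention-delay definition applied capsule-by-capsule.

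The second and more delicate half is establishing that $A'$ is correctly encapsulated, i.e.\ that each single-instruction capsule is correct in the sense of Definition~\ref{def:caps_llcorrect}; this is really what makes the simulation faithful despite crashes, and it is where I expect the main difficulty to lie. I would proceed by a case analysis on the type of instruction inside the capsule. If $I$ does not touch persistent memory, or is a private persistent write, the capsule trivially satisfies the correctness conditions since repetitions are invisible (the write simply overwrites its own previous value, and no other process could have observed the intermediate state). The crux is the shared-CAS case. Upon recovery, the process calls \texttt{checkRecovery} with the CAS's sequence number, which in turn calls \texttt{Recover}; by the sequential specification of \recover{} and the strict linearizability guarantee of Theorem~\ref{thm:reccas}, the returned flag and sequence number correctly indicate whether that specific CAS already succeeded. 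I would then argue that if the CAS had already taken effect, \texttt{checkRecovery} returns \texttt{true} and the process skips the repeat, so no duplicate successful CAS occurs; and if it had not taken effect (or failed), repeating it is safe. In either direction, every CAS executed beyond the ``intended'' single one is a failed CAS that leaves persistent memory unchanged, hence invisible; removing these invisible operations leaves a history in which the capsule appears to execute exactly once.

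The subtle point I would stress is why strict linearizability, rather than merely NRL, is needed: a crash may occur \emph{after} the CAS has already completed and even after the process has conceptually moved past it but before the boundary is persisted, so the recovery must correctly report the status of an already-completed operation, which is precisely the strengthening over NRL emphasized in the text. I would also need to confirm the notification mechanism interacts correctly under repeated recoveries---that running \texttt{Recover} multiple times (clause 2 of Definition~\ref{def:caps_llcorrect} allows arbitrarily many restarts) does not corrupt the sequence-number accounting. This reduces to the fact that \texttt{Recover} is itself idempotent on the shared state (its only write, the \cas{} on \texttt{A[pid]}, is an ABA-free conditional update that is a no-op once applied), which follows from strict linearizability of the recoverable CAS object. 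Having established capsule correctness, I would finally bound the recovery delay: recovery reloads the constant-size persisted context and runs at most one \texttt{checkRecovery} (a constant number of steps by Theorem~\ref{thm:reccas}'s recovery-delay-freedom) before reaching the pre-crash point $\rho$, giving the constant $c'$. The main obstacle throughout is the careful crash-placement argument in the CAS case---ensuring that for \emph{every} point at which a crash can interrupt the capsule (including between the CAS and the boundary, and during recovery itself), the combination of recoverable-CAS semantics and the \texttt{checkRecovery} wrapper renders all spurious repetitions invisible.
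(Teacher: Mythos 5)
Your proposal is correct, and its delay-counting core coincides with the paper's argument: each instruction of $A$ becomes one capsule whose body (a recoverable CAS wrapped in \texttt{checkRecovery}, or a read/local step, followed by a boundary) costs $O(1)$ primitive steps, the recoverable CAS is contention- and recovery-delay free by Theorem~\ref{thm:reccas}, and recovery reloads a constant-size context, yielding the constants $c$ and $c'$. Two structural differences are worth flagging. First, where you say the contention bound follows ``capsule-by-capsule,'' the paper makes this lifting explicit via Lemma~\ref{lem:delay}: contention-delay-freedom of each recoverable CAS object \emph{in isolation} does not by itself bound the contention of the composed program; one also needs that distinct base objects are implemented over disjoint sets of primitive objects, and that capsule boundaries touch only per-process locations and hence contribute no contention at all. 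You should state and use that disjointness (it holds by construction, since each recoverable CAS has its own \texttt{x} and announcement array), rather than appealing directly to the second clause of the contention-delay definition. Second, the bulk of your write-up---capsule correctness, invisibility of repeated CASes, why strict linearizability rather than NRL is needed, idempotence of \texttt{Recover}---is not part of the paper's proof of this theorem; the paper deliberately defers all of it to Theorem~\ref{thm:cap1} and its appendix proof, which your argument essentially reproduces (down to the observation that a crash after a completed CAS but before the boundary is the case that forces the stronger property). Folding correctness into the delay theorem is not wrong, and arguably the simulation claim is only meaningful once the capsules are known to be correct, but it is a different decomposition of the work; your delay bounds stand independently of that material.
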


To prove the theorem, we first show a useful general lemma, that relates the way a simulated object is implemented to the contention-delay of a simulation algorithm. This lemma is proven in the supplementary materials.

\begin{lemma}\label{lem:delay}
	Let $A$ be a $k$-computation-delay simulation of $A'$. 
	If for every two base objects $O_{1}$ and $O_{2}$, the set of primitive objects used to implement $O_{1}$ is disjoint from the set used to implement $O_{2}$ in $A$, then $A$ is a $k$-contention-delay simulation of $A'$.
\end{lemma}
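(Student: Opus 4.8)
The plan is to verify the two conditions in the definition of a $k$-contention-delay simulation. The first condition, that $A$ is a $k$-computation-delay simulation of $A'$, holds by hypothesis, so all the work lies in the second condition. I would fix an arbitrary execution $E$ of $A$ together with an operation $op$ in $E$ that simulates some access $a$ to a base object $O$ of $A'$, assume $op$ experiences $kC$ contention, and aim to exhibit an execution $E' \in E'_M$ in which $a$ experiences at least $C$ contention. Throughout I read the contention of the simulating operation $op$ as measured against the set $S_O$ of primitive objects used to implement $O$ (the ``footprint'' of $O$ in $A$), since $S_O$ is exactly what plays the role of $O$ at the primitive level; the disjointness hypothesis is what makes this footprint clean.

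The core is a counting step. Let $[s,e]$ be the interval of $op$ in $E$. By definition the contention of $op$ equals the number of responses to primitive operations on objects of $S_O$ occurring in $[s,e]$, namely $kC$. Because the primitive objects implementing distinct base objects are disjoint, every primitive operation on an object of $S_O$ is a step of some simulating operation for $O$, and never of a simulation of any other base object. Since $A$ is a $k$-computation-delay simulation, each such simulating operation for $O$ consists of at most $k$ primitive steps in total, hence contributes at most $k$ responses on $S_O$ during $[s,e]$. Dividing, at least $C$ distinct simulating operations $op'_1,\dots,op'_m$ (with $m \ge C$), each an access to $O$, contribute at least one response on $S_O$ inside $[s,e]$.

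Next I would translate this concurrency back to $A'$. Each $op'_j$ has a primitive step responding at some time in $[s,e]$, so $op'_j$ is active during $op$'s interval and therefore overlaps $op$ in $E$. Under the partial order on base-object accesses induced by $E$ (where one access precedes another only if its simulating operation finishes before the other begins), each corresponding access $a'_j$ to $O$ is thus incomparable to, i.e.\ concurrent with, $a$. Since all $m \ge C$ of these accesses are concurrent with $a$, I would select an execution $E' \in E'_M$ that respects the partial order and schedules the response of every $a'_j$ to fall within the interval of $a$; as the $a'_j$ are accesses to the same object $O$, these responses count as contention on $a$, giving $a$ at least $C$ contention, as required.

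The step I expect to be the main obstacle is the last one: arguing that a single legal member of $E'_M$ can simultaneously place all $C$ responses inside $a$'s interval. I would justify this by observing that the only constraints on members of $E'_M$ are consistency with $A'$'s code and respect for the partial order of $E$; since each $a'_j$ is concurrent with $a$ in that order, stretching $a$'s interval to contain their response points violates no ordering constraint, and a witnessing schedule can be read off from the real-time behaviour already exhibited by $op$ and the $op'_j$ in $E$. I would also take care to emphasize, in the counting step, that disjointness is precisely what rules out contention on $op$ arising from simulations of accesses to \emph{other} base objects; without it, responses on $S_O$ need not correspond to concurrent accesses to $O$ in $A'$, and the reduction would break down.
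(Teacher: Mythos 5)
Your proposal is correct and follows essentially the same route as the paper's proof: disjointness of the primitive-object footprints localizes all of $op$'s contention to operations simulating accesses to the same base object, the $k$-step bound on each simulating operation yields at least $C$ concurrent accesses after dividing, and an execution in $E'_M$ is then chosen that places those accesses' responses inside the interval of the corresponding base-object access. Your write-up is somewhat more explicit about the counting step and about why a single member of $E'_M$ can realize all $C$ contention points at once, but the argument is the same.
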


\begin{proof}[Proof of Theorem~\ref{thm:delay}.]
	Since each recoverable CAS and each capsule can be used to recover in constant time, it is easy to see that $A'$ has constant recovery delay.
	We implement each base object $O$ of $A$ by calling the operations of $O$, followed by a capsule boundary. For CASes, we implement it by replacing the CAS object with a recoverable CAS object and also calling a capsule boundary. Because both the recoverable CAS algorithm and the capsule boundary take a constant number of steps, we have shown that our transformation is a $k$-delay simulation of $A$.
	Furthermore, each recoverable CAS object uses primitive objects that are unique to it, and not shared with any other object. 
	Note that while the capsule boundary does use primitive objects that are shared among other capsule boundaries, the capsule boundaries are in fact local operations, since each process uses its own space for persisting the necessary data. So capsule boundaries do not introduce any contention.
	The rest of the proof therefore follows from Lemma~\ref{lem:delay}.
	%
	%\Naama{contention derives fairly trivially from Lemma~\ref{lem:delay}. Should be careful about local objects shared among different capsule boundaries. Not too hard because they are local and therefore do not introduce contention.}
\end{proof}

	%\input{scratch-work}
	%\input{transformations}

	%\section{Reducing Capsule Boundaries}
\section{Read-CAS Capsules}\label{sec:writecas-capsules}

%\textbf{CAS and Reads Capsule.}
Although they only consist of a constant number of uncontended steps, capsule boundaries can still be expensive in practice, as they require persisting several pieces of data and use two fence instructions. Therefore, we now discuss how to reduce the number of required capsule boundaries in a program, while still maintaining correctness.
Less capsule boundaries means more instructions per capsule. Therefore, upon a crash, several instructions may need to be repeated.

In this section, we focus on programs that use CASes and reads as their mechanisms for accessing shared data. Recall that using our CAS-Write algorithm (Section~\ref{sec:writecas}), we can extend this to programs that use writes as well. We note that this covers many concurrent programs. We show that as long as there is only one CAS operation per capsule, and this operation is the first of the capsule, the program remains correctly encapsulated.

Here we must also be aware of what local operations on heap-allocated variable do. Recall that all stack-allocated variables are written on volatile memory and any changes to them are only persistent at a capsule boundary. Therefore, there is no need to worry about inconsistencies in stack-allocated variables due to program crashes; upon a crash, any changes stack-allocated variables since the last capsule boundary will be lost, and will be safely repeated upon recovery.
However, heap-allocated variables are a different story. Recall that since there may be many heap-allocated variables, we do not handle them in the same way as stack-variables. Instead, they are directly written on persistent memory. Therefore, we must make sure that repeated code does not corrupt their values by setting capsule boundaries in between instructions that may harm each other, just like we do for the shared memory instructions. 

We note that Blelloch \textit{et al.}~\cite{blelloch2018parallel} comprehensively showed how to place capsule boundaries in non-racey persistent code to ensure idempotence. Their guideline is to create capsules that avoid \emph{write-after-read} conflicts. 
In a nutshell, these conflicts occur if a variable can be read and then written to in a persistent manner in the same capsule. If a crash occurs after such a scenario, and the code repeats itself from the read instruction, then the read sees a different value than it did originally. These conflicts can be avoided if it can be guaranteed that after a new value is persisted, the program will never repeat an earlier instruction that reads it.
Therefore, in addition to the capsule boundaries dictated by instructions on shared memory as outlined above, we also place a capsule boundary between a read of a heap-allocated location in memory and the following write to that location.
%
%We can handle write-after-read conflicts by placing a capsule boundary between a read of a local variable and its subsequent persistent write. 
However, note that we don't always need to add this extra boundary; if a capsule begins with a persistent write of a private variable, any number of reads and writes to the same variable may be executed in the same capsule, since a crash in this capsule will always lead restarting the capsule, therefore overwriting the value.
%Furthermore, there is no need to put a capsule boundary between local operations, as long as they don't write to persistent memory, or, if they do, as long as they don't exhibit write-after-read conflicts.

We call this construction a \emph{CAS-Read} capsule.
We also allow for capsules that do not modify any shared variables at all. We call such capsules \emph{Read-Only} capsules.
%We first describe our CAS-and-Read capsule, and prove that it is correct. 
Intuitively, all read operations are always invisible, as long as their results are not used in a persistent manner. 
So, a capsule that has at most one recoverable CAS operation, followed by any number of shared reads, is correct.
% since all repetitions of any of its operations are invisible.

Note that we assume that every process has a sequence number that it keeps locally, and increments once per capsule. At the capsule boundary, the incremented value of the sequence number is persisted (along with other local values, like, for example, the arguments for the next recoverable CAS operation). Therefore, all repetitions of a capsule always use the same sequence number, but different capsules have different sequence numbers to use.

We now describe in more detail how to use the recovery function of the recoverable CAS object. We assume that there is a \texttt{crashed()} function made available to each process by the system, which returns true if the current capsule has been restarted due to a crash, and false otherwise. This assumption is realistic, since in most real systems, there is a way for processes to know that they are now recovering from a crash. We use the \texttt{crashed()} function to optimize some reads of persistent memory--- if we are recovering from a crash, we read in all local values we need for this capsule from the place where the previous capsule persisted them. Otherwise, there is no need to do so, since they are still in our local memory.
We show pseudocode for the CAS-Read capsule in Algorithm~\ref{alg:general_trans}. Read-Only capsules are a subset of the code for CAS-Read capsules.

%\Hao{mention that seq is a per process local variable, define $CRASH\_FLAG$ and explain what the other variables mean.}

%\Hao{Should the type of X be RCAS or just RecoverableObject Type. Guy mentioned a way of distingushing the object from the implementation but I forgot. Should RCAS extend RecoverableObect?}

\renewcommand{\figurename}{Algorithm}
\begin{figure}[!t!h]
	\caption{CAS-Read Capsule}
	\begin{lstlisting}[]
	if (crashed()){
		*Read all vars persisted by the 
		previous capsule into local vars.*
		seq = seq+1;
		flag = checkRecovery(X, seq, pid);
		if (!flag){ //Operation 'seq' wasn't done
			c = X.Cas(exp,new,seq, pid); }
		else {
			c = 1; }	}
	else {
		seq = seq+1;
		//exp and new are from prev capsule
		c = X.Cas(exp,new,seq,pid);} 
		*Any number of Read and local operations*
		capsule_boundary(pc, <all local values>) }
	\end{lstlisting}
	\label{alg:general_trans}
\end{figure}
\renewcommand{\figurename}{Figure}

We now show that the CAS-Read capsule is correct. This fact trivially implies that Read-Only capsules are correct as well, so we do not prove their correctness separately. We wrap up this section by showing that a transformation that applies CAS-Read and Read-Only capsules remains a $c$-contention-delay simulation for constant $c$. Intuitively, removing capsule boundaries can only improve the contention delay of a simulation.

 Note that the definition of correctness is with respect to an algorithm that contains the capsule. Here, we prove the claim in full generality; we want to show that this capsule is correct in \emph{any} algorithm that could use it. For this, we argue that its repeated operations are invisible in any execution, despite possibly arbitrary concurrent operations. 
%To do so, we only have to argue that repeating it several times does not change the return value of an operation that uses it. However, we actually prove a stronger property; that repeating this capsule yields an execution where all but one copy of each low-level instruction are \emph{invisible}. %\Hao{Here RCAS and RREAD are considered low level. We should have an easy way of sayig which level of abstraction we are considering.} 
Note that this implies correctness for any context in which the capsule might be used.

The following theorem is proven in the supplementary materials.
%\Hao{Maybe we need to assume the sequence number increases between capsules}

\begin{theorem}
	\label{thm:cap1}
	If $C$ is a CAS-Read capsule, then $C$ is a correct capsule. We also require that each process increments the sequence number before calling $CAS$.
	% For any execution in which the CAS-Read capsule is restarted due to failure any number of times, all but one instance of each instruction are invisible.
\end{theorem}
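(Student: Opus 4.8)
The plan is to verify the two conditions of Definition~\ref{def:caps_llcorrect} directly, for an arbitrary algorithm $\alg$ containing the capsule $C$ and an arbitrary execution $E$ in which $C$ is restarted from its beginning arbitrarily many times. I would write $C_1, C_2, \ldots, C_m$ for the successive attempts at executing $C$, where $C_1$ is the initial pass (so $\texttt{crashed()}$ returns \texttt{false}) and every later attempt is entered only after a crash (so $\texttt{crashed()}$ returns \texttt{true}); note that attempt $C_{k+1}$ exists only if $C_k$ was interrupted by a crash before reaching the boundary, so all of $C_1,\ldots,C_{m-1}$ crash.

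For Condition~1, I would observe that on every recovery attempt the code first re-reads all the local variables it needs from the location where the \emph{previous} capsule persisted them, and that the initial pass $C_1$ begins with exactly those same persisted values in its registers. Hence the execution of $C$ depends only on the state persisted at the previous boundary, never on volatile values produced earlier. In particular, since the recoverable \cas{} is the first shared step and its arguments $\var{exp}, \var{new}$ are inherited from the previous capsule, and since the requirement that the sequence number be incremented once (from the persisted value) before the \cas{} forces every attempt to compute the \emph{same} $\var{seq}$, all $m$ attempts issue the recoverable \cas{} with identical arguments. This determinism, together with the rule forbidding write-after-read conflicts on heap (private persistent) locations, lets me invoke the idempotence argument of Blelloch \textit{et al.}~\cite{blelloch2018parallel}: each attempt writes the same values to its private persistent variables as a single pass would, and such private writes are invisible since no other process can observe them.

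The heart of the argument is Condition~2. Appealing to strict linearizability of the recoverable CAS (Theorem~\ref{thm:reccas}), I would treat each \readop, \cas, and \recover{} as an atomic operation obeying its sequential specification, and read ``legal'' as ``linearizable.'' Then every \readop, every \recover, and every \emph{failed} \cas{} is read-only on the abstract object, so deleting any of them from a linearizable history leaves it linearizable; these are all invisible. It then remains to establish the key claim: \textbf{at most one \cas{} among $C_1,\ldots,C_m$ is linearized as successful}. I would let $C_j$ be the first attempt whose \cas{} succeeds and argue by induction that no later attempt issues a successful \cas. Each $C_k$ with $k>j$ has $\texttt{crashed()}=\texttt{true}$ and so runs $\texttt{checkRecovery}$ first; because $C_j$'s successful \cas{} (with sequence number $\var{seq}$) was linearized before the crash that ended $C_j$, strict linearizability orders $C_k$'s \recover{} after it, and since $\var{seq}$ exceeds the sequence number of any \cas{} from an earlier capsule and no intervening attempt succeeded (inductive hypothesis), $\var{seq}$ is the largest sequence number of a successful \cas{} by this process. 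The \recover{} specification then returns $f=1$ and $\var{last}=\var{seq}\ge\var{seq}$, so $\texttt{checkRecovery}$ returns \texttt{true} and $C_k$ skips the \cas{}. Thus only $C_j$ changes the abstract value.

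I would finish by noting that after deleting all invisible operations (reads, recovers, failed CASes, and the repeated private writes), what survives of $C$ is at most one successful recoverable \cas{} --- exactly what a single crash-free pass through $C$ leaves behind --- so $C$ appears to have executed once. I expect the main obstacle to be the induction in the key claim, specifically the case where $C_j$'s \cas{} is interrupted by a crash: I must use strict linearizability (rather than mere NRL) to guarantee that a \cas{} linearized as successful just before a crash is still detected by the \recover{} of the next attempt, and symmetrically that a \cas{} which was \emph{not} linearized yields $\var{last}<\var{seq}$ (or $f=0$), so the operation is correctly repeated. Getting the real-time ordering and the sequence-number bookkeeping exactly right across crash boundaries is the delicate part.
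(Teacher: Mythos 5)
Your proposal is correct and follows essentially the same route as the paper's proof: Condition~1 via the persisted locals and the capsule-stable sequence number, and Condition~2 via strict linearizability of the recoverable CAS together with \texttt{checkRecovery}, concluding that at most one successful \cas{} is visible while all reads, recoveries, failed CASes, non-linearized partial CASes, and private persistent writes are invisible. Your version merely makes explicit the induction over restart attempts and the crash-boundary case analysis that the paper's proof states more tersely.
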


\begin{theorem}\label{thm:construction1}
	A program that uses only CAS-Read, Read-Only, and Single-Instruction capsules is correctly encapsulated, and is a contention-delay-free simulation of its underlying program.
\end{theorem}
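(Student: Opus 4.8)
The plan is to prove the two assertions separately, relying on Theorem~\ref{thm:cap1} for correct encapsulation and on the argument of Theorem~\ref{thm:delay} together with Lemma~\ref{lem:delay} for the contention-delay bound.

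For correct encapsulation, I would observe that every capsule of the transformed program is of one of the three permitted types, and that each type is a correct capsule. CAS-Read capsules are correct directly by Theorem~\ref{thm:cap1}. A Read-Only capsule is simply a CAS-Read capsule containing no CAS operation, so its correctness is the special case of Theorem~\ref{thm:cap1} in which all instructions are invisible shared reads (together with overwriteable private writes); this is the sense in which Read-Only correctness is subsumed. A Single-Instruction capsule is likewise a degenerate instance: a lone shared CAS is a CAS-Read capsule with no trailing reads, a lone shared read is a Read-Only capsule, and a lone private persistent write is invisible because the restarting process overwrites its own previous value with no intervening observer. Since every capsule is correct, the program is correctly encapsulated by definition.

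For the contention-delay bound I would follow the template of the proof of Theorem~\ref{thm:delay}. First I would establish a constant computation delay: each shared read is implemented by the constant-step \texttt{Read} of a recoverable CAS object, each shared CAS by the constant-step \texttt{Cas} (preceded, on a crash, by a constant-step \texttt{checkRecovery}), and each capsule boundary costs a constant number of steps. Because the CAS-Read construction uses weakly fewer boundaries than the single-instruction construction and each capsule contains at least one instruction, I can charge the constant boundary cost to a single instruction of its (nonempty) capsule, so every original instruction is replaced by an implementation of at most $k$ steps for a constant $k$. Then I would invoke Lemma~\ref{lem:delay}: each base object is replaced by its own recoverable CAS object, whose primitive objects (the word $x$ and the array $A[P]$ of Algorithm~\ref{alg:reccas}) are private to it and disjoint from those of every other base object, while capsule boundaries touch only per-process persistent scratch space and are therefore local operations introducing no contention. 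With the disjointness hypothesis of Lemma~\ref{lem:delay} met, the $k$-computation-delay simulation is a $k$-contention-delay simulation, and since $k$ is constant the transformation is contention-delay-free.

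I expect the main obstacle to be the computation- and contention-delay accounting for capsules that span several instructions. Unlike in the single-instruction case, a boundary is no longer paired one-to-one with an instruction, so I must argue that amortizing the constant boundary cost onto one instruction of a nonempty capsule preserves constant per-instruction overhead, and, in parallel, that merging instructions into larger capsules never increases the contention any base-object access experiences relative to the single-instruction baseline already bounded by Theorem~\ref{thm:delay} --- intuitively, removing boundaries only reduces the steps interleaved concurrent processes can take against a given access.
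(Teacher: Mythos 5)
Your proposal is correct and follows essentially the same route as the paper: correct encapsulation falls out of Theorem~\ref{thm:cap1} (with Read-Only and Single-Instruction capsules handled as special cases), and the contention-delay bound reuses the machinery of Theorem~\ref{thm:delay} and Lemma~\ref{lem:delay}. The paper dispatches your final concern in one line --- CAS-Read and Read-Only capsules use strictly fewer instructions than the single-instruction encapsulation already bounded by Theorem~\ref{thm:delay}, so merging capsules can only help --- which is exactly the monotonicity intuition you state at the end.
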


\begin{proof}
	Since CAS-Read, Read-Only, and Single-Instruction capsules are all correct (corollary of Theorem~\ref{thm:cap1}), by definition, a program that uses only these capsules is correctly encapsulated.
	Furthermore, since by Theorem~\ref{thm:delay},  a program encapsulated with single-instruction capsules only is a constant-contention-delay simulation of its underlying program, and CAS-Read and Read-Only capsules use strictly less instructions, programs encapsulated with these capsules are also constant-contention-delay simulations.
\end{proof}

%\subsection{Michael Scott Queue}

%\include{msq}

	\section{Normalized Data Structures}
\label{sec:normalized}
	Timnat and Petrank \cite{timnat2014practical} defined \emph{normalized data structures}. The idea is that the definition captures a large class of lock-free algorithms that all have a similar structure. This structure allows us to reason about this class of algorithms as a whole. %, and find results that are applicable for all normalized lock-free algorithms. 
	In this section, we briefly recap the definition of normalized data structures, and show optimizations that allow converting normalized data structures into persistent ones, with less persistent writes than even our general \generalSimOpt{} would require.
	We will show two optimizations; one that works for any normalized data structure, and one that is more efficient, but requires a few more (not-too-restricting) assumptions about the algorithm.

%\subsection{Background on Normalized Data Structures}
	Normalized lock free algorithms use only CAS and Read as their synchronization mechanisms. 
	At a high level, every operation of a normalized algorithm can be split into three parts. The first part, called the \emph{CAS Generator}, takes in the input of the operation, and produces a list of CASes that must be executed to make the operation to take effect. The second part, called the \emph{CAS Executor}, takes in the list of CASes from the generator, and executes them in order, until the first failure, or until it reaches the end of the list. Finally, the \emph{Wrap-Up} examines which CASes were successfully executed by the executor, and determines the return value of the operation, or indicates that the operation should be restarted.
	Interestingly, the Generator and Wrap-Up methods must be \emph{parallelizable}, intuitively meaning that they do not depend on a thread's local values, and can be executed many times without having lasting effects on the semantics of the data structure. 

\subsection{Optimization for Normalized Data Structures}

	%We now discuss how to optimize the general transformation from the previous section for normalized data structures. 
	Our \generalSimOpt{} works for all concurrent algorithms that use the required base objects, and, in particular, works for normalized data structures. However, we can exploit the additional structure of normalized algorithms to optimize the simulation.

Note that placing capsule boundaries around a parallelizable method yields a correct capsule. 
This is implied from the ability of parallelizable methods to be repeated without affecting the execution, which is exactly the condition required for capsule correctness. The formal definition of parallelizable methods is slightly different, but a proof that this definition implies capsule correctness appears in~\cite{cohen2015efficient}.
Thus, there is no need to separate the code in parallelizable methods into several capsules according to our general construction.
Furthermore, there is also no need to use \cascam{} for some of the CAS operations performed by paralleizable methods;
%persist the results of individual CASes in a parallelizable method using the \cascam{} protocol, as it fits into the capsule framework as is. 
%Note that this yields \emph{correct}, but not necessarily \emph{low-level correct} capsules.
for normalized data structures, we can simply surround the CAS generator and the Wrap-Up methods in a capsule, and do not need to alter them in any other way.

All that remains now is to discuss the CAS executor, which simply takes in a list of CASes to do, and executes them one by one. No other operations are done in between them. 
Note that we can convert CAS operations to use the \cascam{} algorithm, and then many consecutive CASes could be executed in the same capsule, as long as they access different objects. In the case of normalized data structures, however, we do not have the guarantee that the CASes all access different base objects. Therefore, we cannot just plug in that capsule construction as is. However, we note another quality of the CAS executor that we can use to our advantage: the executor stops after the first CAS in its list that fails.
Translated to the language of persistent algorithm, this means that we do not actually need to remember the return values of each CAS in the list separately; we only need to know the index of the last successful CAS in the list. 
Fortunately, the recovery operation of the \cascam{} algorithm actually gives us exactly that; it provides the sequence number of the last CAS operation that succeeded. Therefore, as long as we increment the sequence number by exactly $1$ between each CAS call in the executor, then after a crash, we can use the recovery function to know exactly where we left off. We can then continue execution from the next CAS in the list. Note that if the next CAS in the list actually was executed to completion but failed before the crash, there is no harm in repeating it. We simply execute it again, see that it failed, and skip to the end of the executor method. 

%It is possible that some CASes need to be recoverable and some do not. This is the case in normalized data structures, as mentioned above; a CAS in a parallelizable method can always be safely repeated without knowing its result, and therefore doesn't need to be recovered. However, we always need to know the result of CASes in the executor, and repeating such a CAS can lead to problems. So, all CASes in the executor must be protected by a recoverable CAS. 
%When a CAS in the generator or wrap-up sections accesses the same location as an executor CAS, we must be careful. Recall that the recoverable CAS works by first \emph{notifying} the process whose ID and sequence number are written on the CAS location, and only then changing the value to something new. If the generator and wrap-up CASes skip the notification step, this can prevent an executor CAS from being able to recover properly.
%We therefore require that before a CAS to a location on which \emph{any} recoverable CAS may occur, a process first execute a notification.

For a \cascam{} to work correctly, all CASes to that object must be done using the \cascam{} algorithm. Whenever a generator or wrap-up method performs CAS on an object that could also be modified by a CAS-executor, it must use \cascam{} instead of regular CAS. This is because even though the generator or wrap-up method never needs to be able to recover a CAS's results, it is still important to notify other processes of the success or failure of their last CAS.

We now discuss a method that allows removing the capsule boundary between the executor and the wrap-up. We argue that as long as we can recover the arguments and results of each executor CAS, it is safe restart the execution from the beginning of the executor.
Suppose a combined executor plus wrap-up section faults and repeats multiple times, we first argue that as long as the wrap-up part cannot overwrite the notification of any CAS in the cas-list, then in every repetition, the executor returns exactly the same index in the cas-list. Recall that we assume CAS operations are ABA-free in the original program (i.e. the object cannot take on a previous value) and that each process calls CAS using a value it previously read as the expected value. This means that if a CAS operation fails the first time, then the same CAS operation will also fail the second time. Furthermore, since we do not overwrite the result of the executor CASes outside the executor, it can always use the recovery properly to know which CAS in the list was the last to succeed.
%So whenever the CAS-executor is repeated, it cannot return an index larger than what it initially returned. It also can't return a smaller index because we can recover the sequence number of the last successful CAS in the executor, since the wrap-up does not overwrite it. 
Therefore the index returned by the CAS-executor will be the same across all repetitions. This means that the executor plus wrap-up capsule basically behaves as if there were a capsule boundary between the two methods. Since the wrap-up method is paralleizable, we know this capsule is correct.

So, to remove the capsule immediately after the executor, we need to ensure that the wrap-up does not corrupt the ability of the recoverable CAS to tell whether the most recent \emph{executor CAS} on each object succeeded, rather than just a CAS that was done in the wrap-up.
If the wrap-up does not access any CAS location accessed by the executor, this property is guaranteed.
However, if the is a CAS in the wrap-up part that accesses the same location as some CAS in the executor, we can still ensure that we can recover.
Let $C_w$ be such a CAS in the wrap-up executed by process $p$. Note that $C_w$ never needs to use the recovery function for itself; since the wrap-up is parallelizable, it is always safe to repeat $C_w$ after a crash. Therefore, when $C_w$ is executed using a recoverable CAS, it can leave out its own ID and sequence number, so that other processes do not notify $p$. Thus, the previous notification that $p$ received (i.e. a notification about $p$'s executor CAS on the same object) remains intact. 
%For example this happens in the Harris Linked List~\cite{harris2001pragmatic} because the next pointer of a node can be changed by both the executor and generator methods.

Notice that if we have two parallelizable methods, $A$ and $B$, next to each other, we can actually put them in a single capsule as long as the inputs to $A$ and $B$ are the same whenever the capsule restarts. Since $A$ and $B$ have the same inputs, we know by parallelizability that $A$ and $B$ each appear to execute once regardless of how many times the capsule restarts. Also $A$ must appear to finish before $B$ because there was a completed execution of $A$ before any invocation of $B$. Therefore, this capsule appears to have executed only once.

So, we can avoid an additonal capsule boundary between the current iteration's wrap up method and the next iteration's generator method, as long as we now use the same notification trick in the CASes of the generator as well. So as long as there are capsule boundaries before and after each call to a normalized operation, we only need one capsule boundaries in each iteration of the main loop: only before the executor. We call this simulation the \emph{\normalOne}. The details of our encapsulation are shown in Algorithm~\ref{alg:norm_trans}. The results of this section are summarized in Theorem~\ref{thm:normalOne}.
%\Naama{Alg~\ref{alg:norm_trans} should be NormalizedO1. Also, it is not clear to my why we need the restriction of O1 that we can't have 2 CASes on the same location. Wouldn't we still be incrementing the seq num, so that we would never get confused and think that we didn't do something we already did?}

\begin{theorem}\label{thm:normalOne}
	%The \normalOne{} applied to any normalized data structure $N$ is a constant-contention-delay simulator of $N$.
	Any normalized data structure $N$ can be simulated in a persistent manner with constant-contention-delay using one capsule boundary in each operation.
\end{theorem}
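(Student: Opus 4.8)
The plan is to establish two separate claims: first, that the proposed encapsulation places exactly one capsule boundary per operation; and second, that this encapsulation is correct (every capsule satisfies Definition~\ref{def:caps_llcorrect}) and yields a constant-contention-delay simulation. The boundary count is immediate once the capsule structure is fixed: we place a single boundary right before each CAS executor, giving one boundary per iteration of the main loop, so that one capsule spans, in order, the executor of the current iteration, the wrap-up, and the generator of the next iteration (assuming the standard boundaries surrounding each call to a normalized operation). The entire work is therefore in showing that this one long capsule, which mixes the non-parallelizable executor with two parallelizable methods, is nonetheless correct.

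The correctness argument I would carry out as a chain. The crux is a lemma stating that whenever the combined capsule is restarted, the executor returns the very same index into its CAS-list on every repetition. I would prove this from three ingredients already assumed: (i) CASes are ABA-free and each uses a previously-read expected value, so a CAS that fails once fails on every repetition; (ii) each executor CAS is a recoverable CAS whose sequence number is incremented by exactly one per call, so after a crash the recovery operation (whose specification is given just before Theorem~\ref{thm:reccas}) reports the sequence number of the last executor CAS that succeeded, letting the executor resume at exactly the next CAS; and (iii) the executor's notifications are never overwritten outside the executor. Granting the lemma, the executor behaves as if a boundary sat immediately after it, so the executor followed by the parallelizable wrap-up is correct, and then---invoking the observation that two adjacent parallelizable methods with identical restart inputs each appear to run once with the first finishing before the second---the wrap-up followed by the next generator is also correct. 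Composing these gives correctness of the full capsule.

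To secure ingredient (iii) I would invoke the notification trick described above: any wrap-up or generator CAS that happens to target a location also touched by an executor CAS is issued as a recoverable CAS that omits its own id and sequence number, so that concurrent processes do not redirect their notification to it and the notification recording the last successful executor CAS on that location survives. Verifying that this preserves the recovery specification is exactly the point where I would lean on the strict linearizability of the recoverable CAS (Theorem~\ref{thm:reccas}). For the contention bound, I would mirror the proof of Theorem~\ref{thm:delay}: each base object is still implemented by a recoverable CAS using primitive objects disjoint from those of every other base object, capsule boundaries remain local (per-process) and hence contention-free, and the transformation is plainly a constant-computation-delay simulation since each base operation expands to a constant number of steps; Lemma~\ref{lem:delay} then upgrades this to constant contention delay, and removing boundaries relative to the general construction can only lower contention.

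The main obstacle is the executor-index lemma together with ingredient (iii): the executor is genuinely stateful and not parallelizable, so unlike the wrap-up and generator its repeated CASes are not automatically invisible. The delicate part is arguing that the recovery function still pinpoints the last successful \emph{executor} CAS even though later parallelizable code may CAS on the same locations, which forces the notification-suppression trick and a careful appeal to the recovery specification of the recoverable CAS; getting this interaction right, rather than the counting or the contention bookkeeping, is where the real content of the proof lies.
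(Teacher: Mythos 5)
Your proposal follows essentially the same route as the paper's own argument: a single boundary before the executor, the key lemma that the executor returns the same CAS-list index on every repetition (via ABA-freedom, unit sequence-number increments, and the recoverable-CAS recovery specification), the notification-suppression trick for wrap-up/generator CASes that touch executor locations, the adjacent-parallelizable-methods observation to fuse the wrap-up with the next generator, and Lemma~\ref{lem:delay} plus the disjointness of primitive objects for the contention bound. You have correctly identified the executor-index lemma and the preservation of executor notifications as the real content, which is exactly where the paper's exposition concentrates.
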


\renewcommand{\figurename}{Algorithm}
	\begin{figure}[!t!h]
	\caption{Persistent Normalized Simulator}
	\begin{lstlisting}[]
result_type NormalizedOperation(arg_type input) 
{
	do {
		cas-list = @\textbf{\textit{CAS-Generator}}@(input);
		capsule_boundary(pc, cas-list);
		if (crashed()) {
			cas-list = read(CAS-list);
			seq = read(seq); }
		idx = @\textbf{\textit{CAS-Executor}}@(cas-list, seq);
		<output, repeat> = @\textbf{\textit{Wrap-Up}}@(cas-list, idx);		
	} while(repeat == true)
	return output;
}

int @\textbf{\textit{CAS-Executor}}@(list CASes, int seq) {
	//CASes is list of tuples <obj, exp, new>
	bool crashed = crashed();
	bool done = false;
	for (i = 0; i @$<$@ CASes.size(); i++) {
		if (crashed) {
			done = checkRecovery(CASes[i].obj, seq, p);	}
		if (!done) {
			if (!RCAS(CASes[i])) return i;	}
		seq++;	}	
	return CASes.size();
}
	\end{lstlisting}
	\label{alg:norm_trans}
	\end{figure}
	\renewcommand{\figurename}{Figure}

\section{Handling Write-CAS races}\label{sec:writecas}

%\subsection{High Level Description}

Recall from Section \ref{sec:dcas} that it is often possible to replace shared variable writes with a read followed by a CAS without impacting the correctness of the algorithm.
In this section, we handle the few cases where this is not possible by implementing an array of $M$ writable CAS objects with constant-computation-delay using $O(M+P^2)$ regular CAS objects. The idea is to use a level of indirection to separate out the racy writes and CASes to different memory locations and then replace the non-racy write operations with CAS. 
% Our implementation has constant step complexity counting both local and shared base operations, and is thus computation-delay free.
%\Naama{contention free too?}. 
%\HAO{it could be contention free, we should check after the deadline}

Agazadeh, Golab and Woffel ~\cite{aghazadeh2014making} presented a general technique which can be used to implement a writable CAS object with constant step complexity using $O(P^2)$ CAS objects. Our algorithm is based on their technique.
%that support sequential writes. A write is considered sequential if it is not concurrent with any other operation). 
%Note that non-racy read-write-cas objects support sequential writes so they are more powerful than the objects used by Agazadeh \textit{et al.}.
At a high level, their algorithm maintains an array \texttt{B} of $O(P^2)$ CAS objects and a pointer \texttt{Ptr} which stores the index of the currently active CAS object. High-level Read() and CAS() operations read \texttt{Ptr} and apply their corresponding low-level operation to \texttt{B[Ptr]}. A high-level Write(v) operation looks for a reuseable location \texttt{B[j]} and writes $v$ into \texttt{B[j]} with a CAS (this CAS is guaranteed to succeed). Then it tries to write $j$ into \texttt{Ptr} with a CAS. If it is successful, the write operation is linearized at this CAS. Otherwise, it must have been interrupted by the successful CAS of some other write operation, so this write operation can linearize immediately before that successful CAS. The algorithm for efficiently finding the reusable CAS object \texttt{B[j]} is described later on.

%Each process owns a set of $O(P)$ locations in $B$ that only it can write to (other processes can still be doing read or CAS operations on that location). Read and CAS operations read $Ptr$ and apply their operation to $B[Ptr]$. A write(v) operation by begins by identifing a base object $B[j]$ that it owns and is safe to reuse. $B[j]$ is considered safe to reuse if $Ptr \neq j$ and no other process can access it until after $j$ is written to $Ptr$. Next, the operation writes $v$ into $B[j]$ and sets $Ptr$ to point to it. We describe the algorithm for efficiently finding this safe location later on.

We extend these ideas to implement $M$ writable CAS objects by increasing the size of \texttt{B} to $M + \Theta(P^2)$ and turning \texttt{Ptr} into an array of size $M$. The value of the $j$-th simulated object will be stored in \texttt{B[Ptr[j]]}. To find reusable locations in \texttt{B}, each process maintains a set of $\Theta(P)$ locations that it owns. These sets are disjoint and do not contain any location that is pointed to by an element of \texttt{Ptr}. Each write operation picks a reusable location from the set that it owns and if it successfully updates some pointer \texttt{Ptr[j]}, then it loses ownership of that location and gains ownership of the location that was previously in \texttt{Ptr[j]}.

%Just as before, to write $v$ into $o_j$, the process $p$ first finds a location $k$ in $B$ that it owns and is safe to reuse, and then it writes $v$ into $B[k]$. Now instead of writing $k$ into $Ptr[j]$, the operation reads $Ptr[j]$ and tries to update it with a CAS. Also, if $p$'s CAS was sucessful, then $p$ loses ownership of location $k$ and gains ownership of the location that was previously in $Ptr[j]$. This swap maintains the property that each processes owns a set of $O(P)$ locations that are not pointed to by $Ptr[j]$ and not owned by any other process. Note that there are no races between writes and CASes on any of these variables.

Next, we explain how to locate reusable locations in \texttt{B[i]}. We first describe an implementation with amortized constant step complexity, and then breifly explain how to deamortize it. Just like in Agazadeh \textit{et al.}'s algorithm, we use a variant of Hazard Pointers~\cite{michael2004hazard} to keep track of which CAS objects are in use. To compute the reusable locations, process $p_i$ scans the announcement array and makes a list $L_i$ of all the indicies in \texttt{B} it owns which were not announced. We use helping to ensure that the locations in $L_i$ are safe to reuse. When performing a Write() operation, $p_i$ allocates from its local list $L_i$ until it runs out. Then it has to compute a new list. If each process owns $2P$ locations in \texttt{B[i]}, then the new list can be computed in $O(P)$ time and this happens at most once every $P$ Write() operations. Therefore this algorithm has constant amotrized complexity. The pseudo-code can be found in Algorithm \ref{alg:rwcas} in Appendix \ref{sec:rwcas-alg}. To deamortize, each process can maintain two lists of reusable locations and whenever it allocates from one list, it performs a constant amount of work towards populating the other. 

It is also possible to modify this algorithm to support the allocation of new writable CAS objects in case the number of writable CAS objects needed is not known in advance.

\hide{ \subsection{Proof}
\label{sec:rwcas}

\section{Pseudocode for Writable CAS Objects}
\label{sec:rwcas-alg}

	\renewcommand{\figurename}{Algorithm}
	\begin{figure}[!t!h]
	\caption{Implementing $M$ writable CAS objects using regular CAS objects. Code for proccess $p_i$}
	\begin{minipage}{.49\textwidth}
	\begin{lstlisting}
Object B[M+2*P*P];
int Ptr[M];
int free_ptr@$_\text{i}$@;

Value read(int j) {
	int idx = getObjectIdx(j);
	return B[idx].read();
}

bool CAS(int j, Value old, Value new){
	int idx = getObjectIdx(j);
	return B[idx].CAS(old, new);	
}

Value Write(int j, Value new_val) {
	int new_ptr = free_ptr@$_\text{i}$@;
	// This CAS cannot fail
	B[new_ptr].CAS(B[new_ptr], new_val); 
	int old_ptr = Ptr[j]; @\label{line:write_read}@
	if(Ptr[j].CAS(old_ptr, new_ptr)) @\label{line:write_cas}@
		free_ptr@$_\text{i}$@ = recycle(old_ptr);
}


Struct Announcement {
	int index;
	int seq;
	bool help;
} A[P];

Object getObjectIdx(int j) {
	int seq = A[i].seq+1;
	// This CAS cannot fail
	Announcement a = <j, seq, 1>;
	A[i].CAS(A[i], a); @\label{line:getobj_write}@
	int ptr = Ptr[j]; @\label{line:getobj_read}@
	A[i].CAS(a, <ptr, seq, 0>); @\label{line:getobj_cas}@
	return A[i].index;
}
\end{lstlisting}
\end{minipage}~~~
\begin{minipage}{0.49\textwidth}

\lstset{
  numbers=left,
  stepnumber=1,    
  firstnumber=40,
  numberfirstline=true
}

\begin{lstlisting}
Struct Status {
	int pid;
	bool announced;
} status[M+2*P*P];

List free_list@$_\text{i}$@;
List retired_list@$_\text{i}$@;

int recycle(int ptr) { 
	retired_list@$_\text{i}$@.push(ptr);
	// This CAS cannot fail
	status[ptr].CAS(status[ptr], <i, 0>); 
	if(free_list@$_\text{i}$@.empty()) {
		List ann_list;
		for(int j = 0; j < P; j++) {
			Announcement a = A[j];
			if(a.help) {
				int ptr = Ptr[a.index];
				A[j].CAS(a, <ptr, seq, 0>); @\label{line:retire_cas}@ }
			a = A[j];
			int idx = a.index;
			if(!a.help && status[idx] == i) {
				ann_list.push(a.index);
				// This CAS cannot fail
				status[idx].CAS(status[idx], 
				                       <i, 1>);}}
		List new_retired_list;
		for(ptr in retired_list@$_\text{i}$@) {
			if(status[ptr].announced)
				new_retired_list.push(ptr);
			else
				free_list@$_\text{i}$@.push(ptr); }
		retired_list@$_\text{i}$@ = new_retired_list;
		for(ann_ptr in ann_list)
			// This CAS cannot fail
			status[a.index].CAS(
			     status[a.index], <i, 0>); }
	return free_list@$_\text{i}$@.pop();
}
	\end{lstlisting}
	\end{minipage}~~~
	\label{alg:rwcas}
	\end{figure}
	\renewcommand{\figurename}{Figure}

\lstset{
  numbers=left,
  stepnumber=1,    
  firstnumber=1,
  numberfirstline=true
}

We will use all caps to denote high level operations and lower case for low level operations. For example \hwrite{}{}() vs \lwrite{}(). The operation \hrc{}{}() denotes either a \hread{}{}() or a \hcas{}{}() operation. Similarly for \lrc{}(). For high level operations such as \hcas{a}{b}, the superscript denotes the high level object being operated on and the subscript denotes the low level object that was affected. Note that the subscript also represents the index returned by the call to \getobj{}().

We say that a \hwrite{}{}() operation is successful if it's \lcas{}() on line \ref{line:write_cas} is successful. Otherwise, we say the \hwrite{}{}() was unsuccessful. A successful \hwrite{}{}() operation is linearized at the \lcas{} on line \ref{line:write_cas}. If an \hwrite{}{}() $W$ was unsuccessful, there must have been a successful \hwrite{}{}() linearized between lines \ref{line:write_read} and \ref{line:write_cas} of $W$. $W$ is linearized immediately before the linearization point of the first such \hwrite{}{}() operation ('first' is abitrary, linearizing before any such \hwrite{}{}() would also be fine.). 

\begin{lemma}
\label{lem:getobj}
If a \getobj{}($a$) operation returns $b$, then \var{Ptr}[$a$] = $b$ at some configuration during the operation.
\end{lemma}

\begin{proof}
	Suppose the \lcas{} on line \ref{line:getobj_cas} succeeds. Then \var{Ptr}[$a$] = $b$ on line \ref{line:getobj_read} and we are done. If the \lcas{} fails, then it must have been inturrupted by some successful \lcas{} from line \ref{line:retire_cas} of \retire{}. In order for this \lcas{} to be successful, lines \ref{line:retire_read} and \ref{line:retire_cas} of \retire{} must have occured between lines \ref{line:getobj_read} and \ref{line:getobj_cas} of \getobj{}. Therefore line \ref{line:retire_ptr_read} occurs during the \getobj{} operation and \var{Ptr}[$a$] = $b$ at this step.

	%There is exactly one \lcas{} operation $c$ that succeeds on $A[i]$ during the \getobj{} operation by process $p_i$, and the \getobj{} operation returns the location written by this successful \lcas{}. Consider the read $r$ corresponding to $c$. 
\end{proof}

\hread{a}{b} operations $R$ are linearized according to two cases:
\begin{itemize}
	\item Case A: If \var{Ptr}[$a$] = $b$ when $R$ executes it's low level \lread{b}(), then $R$ is linearized at this step.
	\item Case B: Otherwise, by Lemma \ref{lem:getobj}, there exists a configuration during the \getobj{}($a$) operation where \var{Ptr}[$a$] = $b$. Since \var{Ptr}[$a$] does not equal $b$ later on at the \lread{b} operation, there exists a step that changes \var{Ptr}[$a$] from being equal to $b$ to being not equal to $b$. $R$ is linearized immediately before the last such step.
\end{itemize}

Linearization points for \hcas{a}{b} are defined analogously to \hread{a}{b}. The linearization points of successful \hwrite{}{} operations and case A \hrc{}{} operations are guaranteed to not overlap with the linearization points of all other operations. However unsuccessful \hwrite{}{} operations and case B \hrc{} operations are all linearized immediately before some successful \hwrite{}{} operation, so they could possibly overlap. If multiple operations are linearized at the same point, we break ties by linearizing any \hrc{}{} operation before any unsuccessful \hwrite{}{} operation. We linearize the overlapping \hrc{}{} operations in order of their \lrc{} operations and we can linearize the unsuccessful \hwrite{}{} operations in any order.

Let $H$ be an execution history and let $L$ be a list of all the operations in the history in linearized order. Let $W$ be a successful \hwrite{a}{b} operation for some $a$ and $b$, and let $W'$ be the next successful \hwrite{}{} operation on the object $a$. (corner case when $W$ or $W'$ does not exist)
Consider all operations on object $a$ linearized between the linearization points of $W$ and $W'$ (excluding $W$, excluding $W'$), and call this set $S$. We will show that these operations have the correct return values. It suffices to prove two facts:

\begin{itemize}
	\item (1) The operations in $S$ all operate on the same low level object $b$.
	\item (2) All the low level operations in $S \cup \{W\}$ occur on $b$ in the same order that they are linearized.
	\item (3) No operation outside of $S \cup \{W\}$ operates on $b$ in the interval between the \lwrite of $W$ and the \lrc{}() of the last operation to be linearized in $S$.  
\end{itemize}

(1) First note that \var{Ptr}[$a$] = $b$ between the linearization points of $W$ and $W'$, so all operations in $S$ are linearized when \var{Ptr}[$a$] = $b$. In both case A and case B, \hrc{c}{d} operations are linearized at a configuration where \var{Ptr}[$c$] = $d$ for all $c$ and $d$. Therefore $S$ contains only \hrc{a}{b} operations.

(2.1) Claim: All \lrc{} operations from $S$ occur on $b$ after the \lwrite{} from $W$. 

Proof: the \lwrite{} from $W$ occurs before $W$'s linearization point and the \lrc{} operation from a \hrc{}{} operation occurs after the high level operation's linearization point in both case A and case B. All operations in $S$ are linearized after the linearization point of $W$ so the claim holds.

(2.2) Claim: \lrc{} operations within $S$ are properly ordered. 

Proof: Within case A and case B, \lrc{} operations are correctly ordered. Case B operations are linearized immediately before $W'$ so they are linearized after Case A operations. For case A operations, their low level operation happens at their linearization points and for case B operations, their low level operations happen after their linearization points. Therefore low level operations from case B happen after low level operations from case A, as required.

(3.1) No other physical writes occur in this interval.

(3.2) Claim: No case A operations outside of $S$ happens during this interval.

Proof: If a case A operation has a low level operation that happens on b in this interval, then its high level operation must have been an \hrc{a}{b} operation and it must have linearized in this interval, so it would have been part of $S$. (maybe state the property this way. If you linearize while \var{Ptr}[$a$] = $b$, then you are a \hrc{a}{b} operation)

(3.3) Claim: No case B operations outside of $S$ happens during this interval. 

Proof: follows from (3.1) because consider a case B operation $E$ outside of $S$ who's low level operation happens on $b$ in this interval. We know that this operation must have been to an object $c \neq a$ because otherwise it would be in $S$. $E$ must be linearized before the linearization point of $W$. Let $W_E$ be the last successful \hwrite{}{} operation on $c$ linearized before the linearization point of $E$. Futhermore, the linearization point of $W_E$ changes $b$ from being retired to being in use. The \lwrite{}() of $W$ happens when $b$ is retired, therefore $W_E$ is linearized before the \lwrite{}() of $W$. Consider the interval starting from the linearization point of $W_E$ to the low level operation of $E$, the \lwrite{} of $W$ happens in this interval which contradicts (3.1)

\begin{definition}
An object is said to be active if it is pointed to by an element of \var{Ptr}. The active set is the set of active elememts. For each process, we define its retired set to be the set of inactive elements that it caused to be inactive due to a CAS. These sets are clearly disjoint and every object is either active or in the retired set of some process.
\end{definition}

Lemma: no two elements of \var{Ptr} point to the same object.

\begin{lemma}
\label{lem:partition}
At every configuration, the set of objects can be disjointly partitioned into an active set and a retired set for each process. The active set is defined to be the set of objects pointed to by \var{Ptr} and the retired set for process $p_i$ is defined to be the set of inactive objects that were last retired by $p_i$. The active set will always have $m$ elements which means that all elements of \var{Ptr} point to distinct objects. 
\end{lemma}

\begin{proof}

\end{proof}

Assumption 

Assumption 1, no two elements of \var{Ptr} point to the same object at any time.

Also important to assume that the low level objects can be disjointly partitioned into $P+1$ sets, an in-use set, and a retired set for each process.

(3.1) Since \hwrite{}{} operations only write to locations in their own retired set, by Lemma \ref{lem:partition}, this claim holds up until the linearization point of $W'$. 

}
	\section{Practical Concerns}
\label{sec:practical}

\textbf{Shared vs Private Model.} Recall that in the \ppm{} model, we assume that the only way for processes to communicate is through persistent memory (i.e., all shared memory is persistent). Furthermore,  the volatile memory is explicitly managed, and no automatic flushes occur. 
A similar model has been considered in several other works \cite{cohen2018inherent,friedman2018persistent,izraelevitz2016linearizability}. In this shared cache model, processes communicate through objects in volatile memory rather than persistent memory.  The values in these objects are persisted when the program issues an explicit flush instruction or when a cache line is evicted automatically. The shared model is more faithful to current cache coherent machines, while the private model helps to abstract away machine-specific flushes.
%In the shared cache model, shared objects can be volatile as well, and thus may be lost upon a crash. 

A simple transformation to convert an algorithm for the private cache variant into an algorithm that works in the shared cache variant was presented by Izraelevitz \textit{et al.} in \cite{izraelevitz2016linearizability}. This transformation simply flushes a cache line immediately after every time it is accessed (read or modified). In our experimental results in Section~\ref{sec:experiments}, we show the performance of our algorithms when the Izraelevitz transformation is applied to them. An algorithm can also be transformed from the private model to the shared model with more careful manual insertions of flushes, that may avoid some unnecessary overhead. We also show the performance of a manual transformation in our experiments. 
%The question of transforming from the private to the shared model in an automatic way without flushing everywhere is an interesting open problem.
%\Naama{Maybe move this paragraph to a `practical concerns' section?}
When transforming an algorithm from the \ppm{} model to a model in which cache lines may be automatically evicted, one needs to consider not only shared variables, but also local ones. Inconsistencies can occur in code that might repeat changes to persisted local variables (due to a crash). This can be handled again by avoiding write-after-read conflicts~\cite{blelloch2018parallel}, as is discussed for heap-allocated variables in Section~\ref{sec:writecas-capsules}.

\textbf{Compiler.} Note also that we treat shared variables differently from private ones; private heap-allocated variables may be written to many times in a single capsule, but this is disallowed for shared variables. Therefore, it is important that the compiler be able to distinguish between these two types of variables. One way to achieve this is to have the user annotate shared variables. We also need annotations to allow the compiler to determine which of our constructions should be used; we assume the user knows whether their program is in normalized form, and if so, can annotate the generator, executor, and wrap-up sections.
We also assume that for simple cases the compiler can determine if a variable is ever used again.
Therefore a capsule boundary only needs to persist the variables that may be used in the future.

\textbf{Constant Stack Frames.} Recall that for the stack-allocated variables, we assume that there is only a constant number of them (around the same as the number of bits in a word) in each stack frame. This is important to be able to atomically update the validity mask of the variables in each capsule boundary, as in Section~\ref{sec:model}.   

\textbf{CAS.} Also recall that the recoverable CAS algorithm requires storing not only the value, but also an ID and sequence number in each CAS location. This can be achieved by using a double-word CAS, which is common in modern machines.

\textbf{Flushes.} In a capsule boundary, if  all the local variables fit on
the same cache line, then we only need one fence for the capsule since the cache line gets flushed all at once in the private model. Therefore, we can also avoid having a validity bit-mask if all variables always fit in one cache line in all capsule boundaries.
On the other hand, note that on modern machines with automatic cache evictions, writing all variables on one cache line does not guarantee atomicity, since an eviction can happen part way through updating the cache line. 
However, we can still assume, as is done in~\cite{cohen2017efficient}, that writes to the same cache line are flushed in the order they are written. 
Intuitively, this is because on real machines, the following three properties generally hold: (1) total store order (TSO) is preserved, (2) individual words are written atomically, and (3) each cache line is evicted atomically.
%Intuitively this is because a cache line can be evicted at any time, but when that happens, all changes on it appear in the memory atomically. 

%There are some cases, as in the \normalOne{}, where this difference is important; when the CAS generator outputs very few CASes, all of their parameters can fit on a single cache line. In this case, in the private model, we can execute the capsule boundary with a single flush, as outlined above. However, when running the experiments, the lack of atomicity becomes dangerous, since partial updates could reach and corrupt the memory. This danger can be avoided if there is only \emph{one} CAS to be executed, and if we carefully order the writes of its parameters in the single cache line. This is what we do in the experiments; we show the results of this implementation technique in \emph{NormalizedO2}.

\textbf{Memory mapping}. Note that for our algorithms to recover, we assume that after a crash, the each process can always find the memory in which the capsule boundary stored information. This requires persisting the page table. We assume that this is done by the operating system. We further assume that each process is assigned the same virtual address space as it was before the crash. These two assumptions together ensure that each process receives the same \emph{physical} address space before and after a crash. More details about the virtual to physical mapping for persistent memory is given in~\cite{chakrabarti2014atlas}.

	\section{Experiments}
\label{sec:experiments}
%We applied our general and normalized data structure transformations to the lock-free queue of Michael and Scott~\cite{michael1996simple} to measure the overhead of our transformation. We also 
We measured the overhead of our general and normalized data structure transformations by applying them to the lock-free queue of Michael and Scott~\cite{michael1996simple}. We also compare against Romulus~\cite{correia2018romulus}, a persistent transactional memory framework, and LogQueue~\cite{friedman2018persistent}, a hand-tuned, durable and detectable queue. 
%We also compared our transformed queues against two other state-of-the-art implementations: one uses a persistent transactional memory framework called Romulus~\cite{correia2018romulus}, and the other is a hand-tuned detectable queue, known as the LogQueue~\cite{friedman2018persistent}. 
Romulus provides durability and detectability in the shared cache model as well. We use the shared cache model for our experiments because it's closer to the machine that we test on.
But this means we need to somehow translate our detectable queues from the private cache model to the shared cache model. We consider two different ways of doing this translation: automatically by Izraelevitz \textit{et al.}'s durability transformation~\cite{izraelevitz2016linearizability} or manually by hand.
%These two algorithms are both detectable and durably linearizable in the shared cache model. The shared cache model is more faithful to the machine we used for the experiments than the private model. 
%To evaluate the overhead of automatic detectability via our transformations, we first convert it to work in the shared cache model by making it durably linearizable. We describe the techniques we used to implement a durably linearizable queue later in this section. 
% Our experiment setup also describes the techniques that implement a queue that satisfies durable linearizability.

We ran our experiments on an Amazon's EC2 web service with Intel(R) Xeon(R) Platinum 8124M CPU model (8 cores, 3GHz and 25MB L3 cache), and 16GB main memory.
%Each core is 2-way hyperthreaded giving 144 hyperthreads. 
The operating system is Ubuntu 16.04.5 LTS.
%(kernel version 4.10.0-40-generic).
%We also measured performance on an {\bf AMD} platform. For lack of space, we relegate these results to Appendix~\ref{sec:appendix}.

We test the performance of the execution on a real system, as in~\cite{friedman2018persistent, chakrabarti2014atlas, correia2018romulus}, assuming that the cost of flushes on current systems will be similar to what we will see in real NVM systems. 
This set up mimics the performance of battery-backed DRAM systems such as Viking NVDIMM \cite{nvdimm}.
%I can't find a good way to word the next sentence 
%This is reasonable due to the existence of local batteries that will save the local volatile data to the non-volatile shared memory, upon a crash.

All functions were implemented in C++ and compiled using the g++ 5.4.0 with -O3. We only measured the performance on 1-8 threads, as queues are not a scalable data structure. As in previous work~\cite{michael1996simple, friedman2018persistent}, we evaluated the performance with threads that run enqueue-dequeue pairs concurrently. In all the experiments we present, the queue is initiated with 1M nodes; however, we also tested on a nearly-empty queue and verified that the same trends occur. The \emph{flush} operation consists of two instructions: \emph{clflushopt}, and \emph{sfence}. \emph{Clflushopt} has store semantics as far as memory consistency is concerned. It guarantees that previous stores will not be executed after the execution of the \emph{clflushopt}. According to Intel, flushing with \emph{clflushopt} is faster than executing flushes using \emph{clflush}~\cite{IntelManual}. The \emph{sfence} instruction guarantees that the \emph{clflushopt} instruction is globally visible before any following store instruction in program order becomes globally visible. We omitted some fences when the ordering of the flushes is not important.
%in the \emph{flush} operation whenever there were two consecutive fences that only the second one was critical. 
All the presented results use the \emph{recoverable CAS algorithm} that was proposed by Attiya \textit{et al.} \cite{attiya2018nesting}. In our experiments, their algorithm performed slightly better than ours and thus was the one that was presented.

%We present the difference in the throughput of the queues across different numbers of threads. 
%Our graphs show how the performance of each queue changes as the number of threads increase. 
Each of our tests were run for 5 seconds and we report the average throughput over 10 runs. 
%The x-axis denotes the number of threads, and the y-axis shows millions of operations per second. 
In general, queues that contain less flushes perform better, which is consistent with what we expected.

In our experiments, our goal is to understand the overhead general programs would observe if they were made persistent using various methods. When we run the queue experiments, we keep in mind that these queues should be used within general programs. So, before calling each of the queue operations, the general program has to execute a capsule boundary. This is true for \emph{all} queues that we test, including the LogQueue and Romulus. Therefore, since this additional overhead would be the same for all queues tested, we omit it in our comparative experiments.
However, we note that the LogQueue and Romulus produce stand-alone data structures, that maintain more information than our queues do if the initial capsule boundary is removed. This means that in some specific contexts, for example when a few queue operations are executed consecutively, a capsule boundary can be avoided before calling LogQueue or Romulus operations, whereas our constructions still require it.
It is possible to store some extra information in our queue constructions to match the properties of the other queues, but this requires some careful manipulations, which are outside of the scope of this paper.

\paragraph{Using the Izraelevitz Construction}
One way to automatically achieve durable linearizability is to use the construction presented by Izraelevitz \textit{et al.}~\cite{izraelevitz2016linearizability}. This construction simply adds a flush after every shared memory operation.
It is a general way to make any algorithm in the private model work in the shared cache model, and requires no understanding of the semantics of the program. Figure~\ref{fig:ReadWriteFlush} shows the result of applying our transformations along with Izraelevitz's construction to the Michael-Scott queue (MSQ)~\cite{michael1996simple}.
To isolate the overhead of our transformations, we also show the performance of a Michael Scott queue with just the Izraelevitz construction. We call this the {\emph{Izraelevitz} queue and it is an upper bound on how well our transformations can perform. The result of the \generalSimOpt{} is called \emph{General}. \emph{Normalized} represents the normalized data structure transformation introduced in Section \ref{sec:normalized}.
% and \emph{NormalizedO2} is the same transformation except with the optimized capsule boundaries presented in Section \ref{sec:definition}. 
%We can used optimized capsule boundaries in both the enqueue and dequeue because they both consist of a loop with a single capsule boundary inside. In the case of enqueue operations, the only variable that changes across iterations of the loop is the enqueue location and for dequeues, the only variable that changes is a pointer to the node we want to dequeue.

%The results are presented in Figure~\ref{ReadWriteFlush}. 
As the general construction contains more \emph{capsule boundaries} than the \emph{Normalized} queue, we can see that \emph{Normalized} performs 1.5x better when there are 2 running threads, and by 1.15x when there are 8 threads.
%\emph{NormalizedO2} ends up being 1.21x and 1.18x faster on 2 and 8 threads respectively.
%The cost of this boundary is represented by throughput that is 1.21x  and 1.18x higher when running, respectively, 2 and 8 threads in \emph{NormalizedO2}. 
Without any of the detectability transformations, the \emph{Izraelevitz} queue performs better than \emph{Normalized} by 1.3x when running 2 threads and by 1.34x when there are 8 threads.
%However, when there is no detectability at all, the \emph{Durable} queue performs better by 1.68x when running 2 threads and by 1.28x when there are 8 threads.
%
%\begin{figure}[h!]
%	\centering
%	\includegraphics[width=70mm,scale=0.5]{graphs-most-updated/AllQueuesReadWriteFlushes(RCAS2).png}
%	\caption{Throughput of transformed queues with the Izraelevitz Construction.} \label{fig:ReadWriteFlush}
%\end{figure}

\paragraph{Competitors}
Another way to make our transformed queues durable is to add flushes manually.
The flushes we add are very similar to those in Friedman \textit{et al.}'s Durable Queue \cite{friedman2018persistent}. 
The difference is that we flush both the head and tail to allow for faster recovery and we omit the return value array because it is not needed for durability. Friedman \textit{et al.} used the return value array to recover return values following a crash, but this functionality is handled by our transformations.

In these experiments, we also include optimized versions of the queues transformed with our simulations. We optimize what is stored in each capsule using insight specific to the queue algorithm.
Our goal in this is to demonstrate two separate traits of our methodology: (1) that our simulators are easy to apply automatically and without special understanding of the program, but also that (2) with our methodology, the number of local variables used really affects performance. Often, when writing code, programmers don't pay attention to how many local variables are used, and may use them superfluously. We demonstrate optimized versions in which we eliminate unnecessary local variables before applying our simulators, which yields significant performance benefits.
%the method leaves room for optimizations that allow for programs persisted in this way to be competitive with the state of the art.
The \emph{General} and \emph{Normalized} implementations correspond to the automatic constructions presented in the paper in \generalSimOpt{} and \normalOne{} respectively, and the \emph{General-Opt} and \emph{Normalized-Opt} implementations correspond to their hand-optimized versions.

Our optimizations include exploiting the property where writes to the same cache line are flushed in the order they are written which presented in the Section \ref{sec:practical}, reducing unnecessary local variables and removing fences that are followed by a CAS, as it already contains a fence. We see that it implies a great difference in the \emph{Normalized} implementations, where we were able to reduce one flush. When running 1-4 threads, the \emph{Normalized-Opt} performs up to 1.57x better than the \emph{Normalized} version. In the 
\emph{General} queue, those differences are less dominant, where the \emph{General-Opt} performs up to 1.28x better than the \emph{General} queue.

We compared the manual flush version of our transformed queues with the \emph{LogQueue}~\cite{friedman2018persistent} as well as a queue written using \emph{Romulus}~\cite{correia2018romulus}. We chose the \emph{RomulusLR} version as it performed better for every thread count. Both these queues provide detectability and durable linearizability. The results are depicted in Figure~\ref{fig:CompetitorFlush2}. Our Normalized simulation performs better than Romulus which is expected because Romulus incurs extra overhead by implementing a persistent memory allocator and general transactional memory.
\emph{Romulus} outperforms our \emph{General} queue for executions with more than 5 threads. This is probably because \emph{RomulusLR} uses flat combining, so multiple update transactions are aggregated and processed with a single lock acquisition and release. 
%It performs up to 1.65x better when there are 8 running threads, as it uses flat combining, so multiple update transactions are aggregated and processed with a single lock acquisition and release. 

The LogQueue is an algorithm specifically designed to make the MS queue durable and detectable. We compare it to our general methods, as well as to the specific implementation optimizations that can be applied on them in the case of the queue.
%
%The LogQueue was optimized by hand for both durability and detectability, so it's interesting to see how this compares against NormalizedO2, which applied an automatic transformation for detectability and manual optimization for durability. 
We found that \emph{Normalized-Opt} performs better by 1.42x on one running thread and \emph{Log} queue is better by up to 1.19x on 3-8 threads. We believe this is because \emph{Normalized-Opt} performs less overall fences compared to \emph{Log} queue, however, in some places, \emph{Normalized-Opt} performs more work in between a read and its corresponding CAS. These instructions bottleneck performance at higher thread counts. With clever cache line usage, it is also possible to reduce \emph{LogQueue} enqueues by one flush, but we did not implement this in our experiments. 
Given that the \emph{LogQueue} was specifically designed for the queue, we were impressed that our \emph{Normalized-Opt} automatic construction gets comparable performance.
We would also like to point out that the recovery function for \emph{LogQueue} requires traversing the entire queue, which can be costly for reasonably sized queues. On the other hand, our recovery function just involves loading the previous capsule and performing the recovery function of a recoverable CAS object. Since we are using Attiya et al's implementation of recoverable CAS \cite{attiya2018nesting}, recovery time is linear in the number of threads.

%\begin{figure}[h!]
%	\centering
%	\includegraphics[width=70mm,scale=0.5]{graphs-most-updated/AllQueuesCompetitors(RCAS2).png}
%	\caption{Comparing our transformed queues with manual flushes to prior work.} \label{fig:CompetitorFlush2}
%\end{figure}

Figure \ref{fig:CompetitorFlushMSQ} shows how these persistent queues compare to the original MSQ without durability or detectability. From the graph, it looks like the cost paid by our transformations to ensure generality and quick recovery is not so much compared to the inevitable cost of persistence.

%The \emph{Log} queue, which is hand-tuned, outperforms the optimized version of the notmalized form when it comes to 3-8 threads by up to 1.52x, as expected. However, when it comes to 1 and 2 threads, \emph{NotmalizedO2} performs better by 1.29x. We believe that it happens due to the fact that the \emph{Log} queue uses an additional \emph{CAS} which includes a fence in its dequeue operation. 

%Interestingly, when we use our \emph{recoverable CAS algorithm} as depicted in Figure~\ref{fig:CompetitorFlush1}, that difference disappears. In Figure~\ref{fig:CompetitorFlushMSQ} we can also see the MSQ as well, showing the inevitable cost of persistence.

% \begin{figure}[h!]
% 	\centering
% 	\includegraphics[width=70mm,scale=0.5]{graphs/AllQueuesCompetitors(RCAS1).png}
% 	\caption{Throughput of competitors and various queue implementations with manual flushes.} \label{fig:CompetitorFlush1}
% \end{figure}

%\begin{figure}[H]
%	\centering
%	\includegraphics[width=70mm,scale=0.5]{graphs-most-updated/AllQueuesCompetitors(RCAS2withregular).png}
%	\caption{Comparing persistent queues to original Michael Scott Queue.} \label{fig:CompetitorFlushMSQ}
%\end{figure}

	\newpage

\begin{figure}[h!]
	\centering
	\includegraphics[width=70mm,scale=0.5]{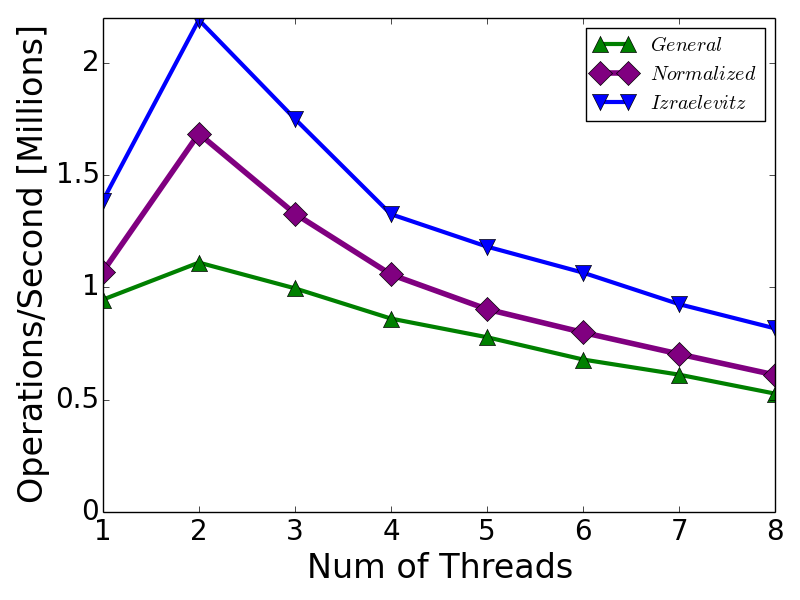}
	\caption{Throughput of transformed queues with the Izraelevitz Construction.} \label{fig:ReadWriteFlush}
\end{figure}

\begin{figure}[h!]
	\centering
	\includegraphics[width=70mm,scale=0.5]{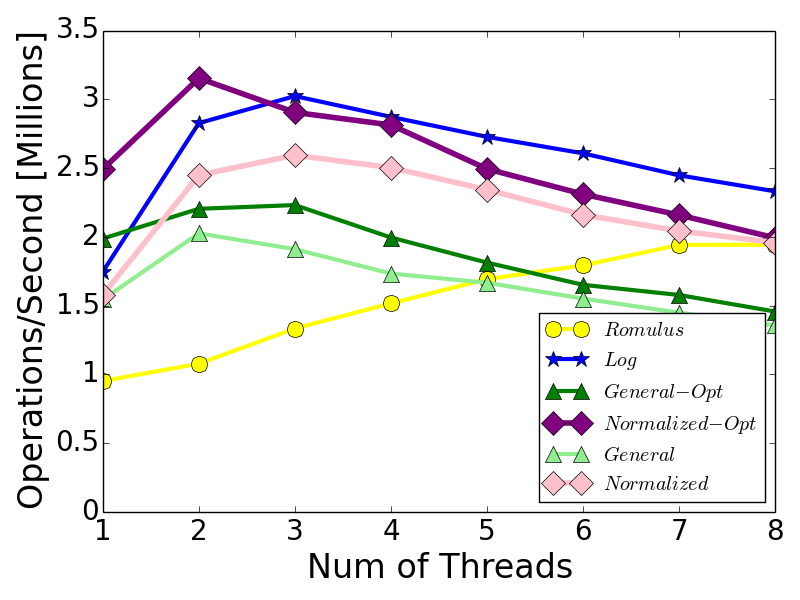}
	\caption{Comparing our transformed queues with manual flushes to prior work.} \label{fig:CompetitorFlush2}
\end{figure}

\begin{figure}[H]
	\centering
	\includegraphics[width=70mm,scale=0.5]{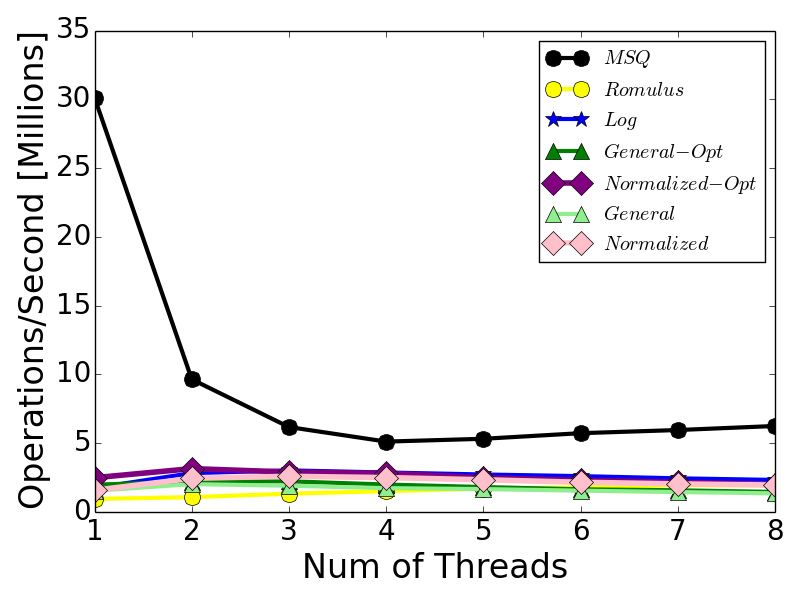}
	\caption{Comparing persistent queues to original Michael Scott Queue.} \label{fig:CompetitorFlushMSQ}
\end{figure}
	%\input{discussion}
	%\input{discussion}
	%\input{shared}
	
	%\input{example}
	
		%% Appendix
\clearpage
	
	\bibliographystyle{plainurl}
	\bibliography{biblio, ref}
	
	\newpage
		\appendix
	\appendix

\section{Recoverable CAS Details and Proof}\label{sec:reccasProof}

	The idea is simple: whenever any process executes a CAS on the object, it not only writes in its value, but also its id and the sequence number of its current operation.
However, before doing that, process $p_i$ takes a few set-up steps.
Before executing its CAS on the object, it first reads the object's state. The state is always of the form $\langle val, j, seq \rangle$, containing the process id $j$ and sequence number $seq$ of the most recent successful CAS in addition to its value. Once $p_i$ reads this state, it must notify $p_j$ of its recent success, by trying to flip the success flag in $A[j]$ from $0$ to $1$. However, $p_i$ will only change $A[j]$ if the sequence number written there is the same one that is read. In this way, a notifying process can never overwrite a more recent value in the announcement array.
Process $p_i$ can then start its own CAS. 
To do so, $p_i$ first prepares its announcement slot by writing the sequence number of this new operation in $A[i]$, with a flag set to $0$ to indicate that this CAS has not yet been successfully executed. It then proceeds to execute its CAS on the object.
To recover from a crash, $p_i$ simply needs to check the object's state to see if its value is written there, and then read its own slot in the announcement array. It is guaranteed to have been notified of its most recent success.%to try to change the state of the object using a regular CAS. However, before doing so, it reads the current value of the object,
Algorithm \ref{alg:reccas} shows the pseudocode for this algorithm. Note that it implements a recoverable CAS object using $O(1)$ steps for all three operations. %\readop{} and \cas{}. 
%This algorithm can also be applied to implementing CAS using only CAM objects. In this case, we would replace the CAS on line \ref{line:rcas_cas} with a CAM and get the return value by calling the recover operation.

	We now prove that Algorithm~\ref{alg:reccas} is correct. Our correctness condition is \emph{strictly linearizability} \cite{aguilera2003strict}; that is, that all operations are linearizable, and are either linearized before a crash event, or not at all. This condition lets us safely repeat an operation if the recovery says that it didn't happen. 
%The following lemma, whose proof we defer to the full version, states the linearization points of the operations.

When a process reads $x$ and performs a CAS on $A[i]$ for some $i$, we can view this as a notify operation.  In Algorithm \ref{alg:reccas}, lines \ref{line:r_r1} and \ref{line:r_r2} of \recover{}, as well as lines \ref{line:rcas_read} and \ref{line:rcas_cam} of \cas{} form notify operations. The purpose of the notify operation is to let a process know that its \cas{} has been successful before overwriting the value. 
The following lemma captures the key property that we require from notify operations.

\begin{lemma}
	\label{lem:notify}
	Let $N$ be an instance of a \notify{} method that reads $x=\langle *, seq, i \rangle$.
	Then after $N$'s execution, $A[i] = \langle seq, 1 \rangle$ or $A[i] = \langle seq', * \rangle$, where $seq' > seq$.
\end{lemma}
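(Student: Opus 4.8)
The plan is to prove the claim through a single monotonicity invariant on the sequence‑number field of $A[i]$, combined with the observation that the announcement $A[i] = \langle seq, 0\rangle$ must precede $N$'s read of $x$. Here $N$ reads $x$ whose process‑id field is $i$ and whose sequence‑number field is $seq$, and then performs exactly one $\cas(A[i],\langle seq,0\rangle,\langle seq,1\rangle)$. Since $N$'s execution consists of only these two atomic steps, the whole task reduces to pinning down the contents of $A[i]$ at the instant $N$'s CAS is applied, and checking that the failed‑CAS case still forces the desired shape.

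First I would record two structural facts about how $A[i]$ evolves. (i) The slot $A[i]$ is written only by process $i$'s own announce step (line~\ref{line:rcas_ann}), which installs $\langle s,0\rangle$ for the sequence number $s$ of $i$'s current operation, and by \notify{} CASes (lines~\ref{line:rcas_cam} and~\ref{line:r_r2}), each of which merely flips a slot of the form $\langle s,0\rangle$ into $\langle s,1\rangle$ and never alters the sequence number. Using the standing assumption that process $i$ issues strictly increasing (timestamped, ABA‑free) sequence numbers, it follows that the first component of $A[i]$ equals the sequence number of $i$'s latest announce and is therefore non‑decreasing throughout the execution; in particular, once it exceeds $seq$ it never returns to $seq$, and the value $\langle seq,0\rangle$ is never reinstated after $i$ moves on. (ii) Because $N$ read a triple with id $i$ and sequence number $seq$, and only process $i$ writes such triples to $x$ (line~\ref{line:rcas_cas}), that value was installed by $i$'s operation with sequence number $seq$, which executed $A[i]=\langle seq,0\rangle$ before its CAS on $x$, which in turn precedes $N$'s read. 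Hence the announce of $seq$ happens before $N$'s read, and by (i) the first component of $A[i]$ is $\ge seq$ at $N$'s read and remains $\ge seq$ thereafter, in particular at $N$'s CAS.

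With these in place I would finish by a case split on $N$'s CAS. If it succeeds, then $A[i]$ equaled $\langle seq,0\rangle$ immediately before and equals $\langle seq,1\rangle$ immediately after, giving the first disjunct. If it fails, then at that instant $A[i]\neq\langle seq,0\rangle$ while its first component is $\ge seq$: if that component equals $seq$, the slot must be $\langle seq,1\rangle$ (the only value with first component $seq$ other than $\langle seq,0\rangle$), and otherwise it is $\langle seq',*\rangle$ with $seq'>seq$. Either way the stated disjunction holds. By the "no reinstatement" part of (i), this property persists for the rest of the execution rather than only at the moment $N$ completes.

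The main obstacle I anticipate is the careful bookkeeping behind invariant (i): I must rule out any interleaving that momentarily drops the first component of $A[i]$ below $seq$ or revives $\langle seq,0\rangle$ after $i$ has advanced, since both the success and the failure cases hinge on it. This rests entirely on the strictly‑increasing sequence‑number discipline, so I would state that usage assumption explicitly and verify that every writer of $A[i]$ respects it. A secondary point to settle cleanly is the tuple convention (the code destructures $x$ as $\langle v,\mathit{pid},seq'\rangle$), so that the hypothesis "$x=\langle *,seq,i\rangle$" is read as "$x$ has process id $i$ and sequence number $seq$" and $N$'s CAS is indeed $\cas(A[i],\langle seq,0\rangle,\langle seq,1\rangle)$.
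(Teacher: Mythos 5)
Your proof is correct and follows essentially the same route the paper takes: monotonicity of the sequence-number field of $A[i]$, the observation that the announce $A[i]=\langle seq,0\rangle$ precedes the installation of the triple with id $i$ and sequence number $seq$ into $x$ and hence precedes $N$'s read, and a case split on $N$'s CAS. Your write-up is in fact more complete than the paper's own one-sentence proof of Lemma~\ref{lem:notify}; the full argument only appears later as the ``useful claim'' inside the proof of Lemma~\ref{lem:rcas_line}, and your version matches it, including the correct resolution of the tuple-order mismatch between the lemma statement and the code's $\langle v,\mathit{pid},\mathit{seq}'\rangle$ convention.
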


\begin{proof}
	We first show that at the first step of $N$, the sequence number in $A[i]$ is at least $seq$. This is because $x = \langle *, seq, i \rangle$. 
\end{proof}

Now we are ready to prove that Algorithm \ref{alg:reccas} is strictly linearizable. Its linearization points are also given by the following lemma.

\begin{lemma}
	\label{lem:rcas_line} 
	Algorithm \ref{alg:reccas} is a strictly linearizable implementation of a recoverable CAS object with the following linearization points:
	\begin{itemize}
		\item Each \cas{} operation that sees $v \neq a$ on line \ref{line:rcas_check} is linearized when it performs line \ref{line:rcas_read}. Otherwise, it is linearized when it performs line \ref{line:rcas_cas}.
		\item Each \readop{} operation is linearized when it performs line \ref{line:rread}.
		\item Each \recover{} operation is linearized when it returns.
	\end{itemize}
\end{lemma}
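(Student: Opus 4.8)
The plan is to verify three things: that the proposed linearization points all lie within their operations' execution intervals (so that real-time order is respected and strictness holds), that the return values of \readop{} and \cas{} operations agree with the sequential specification under the induced order, and that \recover{} returns values consistent with its specification. The first is immediate by inspection: the read at line~\ref{line:rread}, the read at line~\ref{line:rcas_read}, the final CAS at line~\ref{line:rcas_cas}, and the return of \recover{} are all concrete steps executed strictly between the corresponding invocation and response. Since every linearization point is a single executed instruction, any operation whose linearization instruction runs before a crash is linearized before the crash event, and any operation whose instruction never runs is not linearized at all---which is exactly what strict linearizability requires.

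For \readop{} and \cas{}, I would first record the invariant that the value-component of $x$ at any moment equals the value written by the most recent successful \cas{}, which holds because the only steps that alter $x$ are the successful low-level CASes at line~\ref{line:rcas_cas}, each of which is the linearization point of its high-level \cas{}. A \readop{} then returns exactly this value at its linearization point, matching the value of the last successful \cas{} linearized before it. A \cas{} that fails the check at line~\ref{line:rcas_check} reads value $v \neq a$ and is linearized there, so the sequential \cas{} correctly fails without touching $x$. A \cas{} whose low-level CAS at line~\ref{line:rcas_cas} succeeds changes $x$ from $\langle a,\cdot,\cdot\rangle$ to $\langle b,i,seq\rangle$ and is linearized at that change, so it succeeds sequentially. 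The one delicate case is a \cas{} that passes the check but whose low-level CAS fails: here I would invoke ABA-freeness to argue that, since $x$ read $\langle a, pid, seq'\rangle$ at line~\ref{line:rcas_read} but the tuple changed by line~\ref{line:rcas_cas}, the value-component of $x$ cannot have returned to $a$, so the sequential \cas{} correctly fails.

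The heart of the proof is \recover{}. I would establish two invariants about the announcement array: (i) the sequence number stored in $A[i]$ is written only by process $i$ itself at line~\ref{line:rcas_ann}, so it always equals the sequence number of $i$'s most recent \cas{} that passed its value check, while \notify{} steps (lines~\ref{line:rcas_cam} and~\ref{line:r_r2}) only flip the flag from $0$ to $1$ without altering the sequence number; and (ii) the flag in $A[i]$ becomes $1$ only when some \notify{} observed $x=\langle\cdot,i,seq\rangle$, i.e.\ only when $i$'s \cas{} with that sequence number actually succeeded. Given these, the case $f=1$ is immediate: the stored $seq$ is $i$'s most-recently-announced \cas{}, it succeeded, and---being the most recent announced, hence the most recent successful---it is $i$'s last successful \cas{}. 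For $f=0$ I must show $seq$ strictly upper-bounds all of $i$'s successful sequence numbers; since $seq$ is $i$'s most recent announced \cas{} and all earlier CASes carry smaller numbers, it suffices to show this most recent announced \cas{} did not succeed---equivalently, that had it succeeded, the flag would read $1$.

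The main obstacle, and the step I would spend the most care on, is precisely this last claim: that a successful most-recent \cas{} forces $f=1$ by the time \recover{} returns. I would split into two cases using Lemma~\ref{lem:notify} and ABA-freeness. If $i$'s successful value $\langle b,i,seq\rangle$ is still current in $x$ when \recover{} reads it at line~\ref{line:r_r1}, then the read yields $pid=i$ and \recover{}'s own \notify{} at line~\ref{line:r_r2} sets $A[i]=\langle seq,1\rangle$. Otherwise $x$ was overwritten by some process $j$ whose successful CAS at line~\ref{line:rcas_cas} had expected value $\langle b,i,seq\rangle$; then $j$ necessarily read this tuple at its own line~\ref{line:rcas_read} and executed its \notify{} at line~\ref{line:rcas_cam} \emph{before} overwriting $x$, and this occurred before \recover{}'s read, so $i$ had already been notified. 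In both cases $A[i]$ carries flag $1$ when \recover{} returns, completing the argument. Throughout I would rely on the per-process sequentiality of operations to guarantee that no concurrent announce by $i$ can disturb the sequence number in $A[i]$ during $i$'s own \recover{}.
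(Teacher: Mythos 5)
Your proposal is correct and follows essentially the same route as the paper's proof: the same invariant on $x$ with ABA-freeness handling the failed low-level CAS, and for \recover{} the same crucial two-case argument (either $x$ still holds $p_i$'s successful value, so \recover{}'s own \notify{} sets the flag, or the overwriting CAS must have read that tuple and notified before overwriting). The only cosmetic difference is that you organize the \recover{} argument around two announcement-array invariants and a contrapositive, whereas the paper phrases it as ruling out the two bad return values $\langle s',1\rangle$ with $s'>seq(C)$ and $\langle seq(C),0\rangle$; the content is the same.
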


\begin{proof}
	From the linearization points of the algorithm, we see that if an operation stalls indefinitely before reaching its linearization point, then it will never be linearized. Therefore, proving linearizability is equivalent to proving strict linearizability.
	
	To show that \cas{} and \readop{} operations linearize correctly, we can ignore operations on $A$ because they do not affect the return values of these operations. At every configuration $C$, the variable $x$ stores the value written by the last successful \cas{} operation linearized before $C$. This is because the value of $x$ can only be changed by the linearization point of a \cas{} operation. So $x$ always stores the current value of the persistent CAS object.
	
	Each \readop{} operation $R$ is correct because $R$ reads $x$ at its linearization point and returns the value that was read. Each \cas{} operation $C$ is either linearized on line \ref{line:rcas_read} or line \ref{line:rcas_cas}. If $C$ is linearized on line \ref{line:rcas_read}, then it behaves correctly because $x$ does not contain the expected value at the linearization point of $C$. Suppose $a$ is the value $C$ expects and $b$ is the value it wants to write. If $C$ is linearized on line \ref{line:rcas_cas}, then we know that $x = \langle a, j, s' \rangle$ at line \ref{line:rcas_read} of $C$ for some process id $j$ and sequence number $s'$. If $C$ is successful, then $x$ contained the expected value at the linearization point of $C$ which matches the sequential specifications. Otherwise, we know that $x$ changed between lines \ref{line:rcas_read} and \ref{line:rcas_cas} of $C$. Since this recoverable CAS object can only be used in a ABA free manner, we know that $x$ does not store the value $a$ at the linearization point of $C$, so $C$ is correct to return false and leave the value in $x$ unchanged.
	
	In the remainder of this proof, we argue that \recover{} operations are correct. Let $R$ be a call to \recover{} by process $p_i$ and let $C$ be the last successful \cas{} operation by process $p_i$ linearized before the end of $R$. Let $seq(C)$ be a function that takes a \cas{} operation and returns its sequence number. We just need to show that $R$ either returns $\langle seq(C), 1 \rangle$ or it returns $\langle s', 0 \rangle$ for some sequence number $s'$ greater than $seq(C)$.
	
	First note that the sequence number in $A[i]$ is always increasing. This is because its sequence number can only change on line \ref{line:rcas_ann} of \cas{} and each process calls \cas{} with non-decreasing sequence numbers. Thus, the sequence number returned by $R$ is at least $seq(C)$ since $A[i]$ is set to $\langle seq(C), 0\rangle $ on the line before the linearization point of $C$. First we show that $R$ cannot return $\langle s', 1\rangle $ for any $s' > seq(C)$ and then we show that $R$ cannot return $\langle seq(C), 0\rangle $.
	
	Before that, we introduce the notion of a \notify{} method. The purpose of the notify operation is to let a process know that its \cas{} has been successful before overwriting the value. In our algorithm, lines \ref{line:r_r1} and \ref{line:r_r2} of \recover{}, and lines \ref{line:rcas_read} and \ref{line:rcas_cam} of \cas{} can be viewed as a \notify{} method. Each \notify{} method performs 2 steps, it reads $\langle *, s, i\rangle $ from $x$ and performs $CAS(A[i], \langle s,0\rangle , \langle s, 1\rangle )$.
	
	Now we show that $R$ cannot return $\langle s', 1\rangle $ for any $s' > seq(C)$. $R$ returns the value of $A[i]$, so suppose for contradiction that $A[i] = \langle s', 1\rangle $ at some configuration before the end of $R$. Then there must have been a \notify{} operation that set $A[i]$ to this value. This notify operation must have seen $x = \langle *, s', i\rangle $ when it performed its first step. This means that a successful \cas{} by $p_i$ with sequence number $s' > seq(C)$ has been linearized which contradicts our choice of $C$.
	
	%To show that $R$ cannot return $\langle seq(C), 0\rangle $, we introduce the notion of a \notify{} method. The purpose of the notify operation is to let a process know that its \cas{} has been successful before overwriting the value. In our algorithm, lines \ref{line:r_r1} and \ref{line:r_r2} of \recover{}, and lines \ref{line:rcas_read} and \ref{line:rcas_cam} of \cas{} can be viewed as a \notify{} method. Each \notify{} method performs 2 steps, it reads $\langle *, s, i\rangle $ from $x$ and performs $CAM(A[i], \langle s,0\rangle , \langle s, 1\rangle )$.
	
	Now we just need to show that $R$ cannot return $\langle seq(C), 0\rangle $, but first we prove a useful claim. If a \notify{} operation $N$ read $x=\langle *, s, i \rangle$, then after $N$ completes, $A[i]$ will never be equal to $\langle s, 0\rangle$. We know that $A[i]$ cannot be equal to $\langle s, 0\rangle$ immediately after the second step of $N$ due to the CAS it performs. Next, we show that $A[i]$ has sequence number at least $s$ after the second step of $N$. This is because in order for $x$ to be $\langle *, s, i \rangle$, there must have been a successful \cas{} operation with sequence number $s$ and process id $i$. $A[i]$ is set to $\langle s, 0 \rangle$ before the linearization point of this \cas{} operation and the sequence number in $A[i]$ is always increasing, so $A[i]$ has sequence number at least $s$ after the second step of $N$. Summarizing, we've shown that either $A[i] = \langle s, 1\rangle $ immediately after the second step of $N$ or the sequence number in $A[i]$ is larger than $s$. To finish proving the claim, all we need to show is that $A[i]$ cannot change from $\langle s, 1\rangle $ to $\langle s, 0\rangle $ after the second step of $N$. This is because $A[i] = \langle s, 1\rangle $ only after a successful \cas{} by $p_i$ with sequence number $s$ has been linearized and we have a guarantee from the user that $p_i$ will not reuse the sequence number $s$ for future \cas{} operations. 
	
	Using the previous claim, we can complete the proof by showing that there is a \notify{} operation that reads $x=\langle *, seq(C), i \rangle$ and completes before $R$ reads $A[i]$. To show this we just need to consider two cases: either $x$ is changed between the linearization point of $C$ and line \ref{line:r_r1} of $R$, or it is not. In the second case, the notify operation performed by $R$ on lines \ref{line:r_r1} and \ref{line:r_r2} reads $x=\langle *, seq(C), i \rangle$. This is because $R$ and $C$ are performed by the same process so $R$ starts after $C$ ends. In the first case, some other successful CAS operation must have been linearized after the linearization point of $C$ and before $R$ reads $A[i]$. Let $C'$ be the first such CAS operation. Then we know that $x=\langle *, seq(C), i \rangle$ between the linearization points of $C$ and $C'$. Furthermore, in order for $C'$ to have been successful, the read on line \ref{line:rcas_read} must have occurred after the linearization point of $C$ (otherwise $C'$ would not see the most recent process id and sequence number). Therefore the notify operation on lines \ref{line:rcas_read} and \ref{line:rcas_cam} of $C'$ see that $x=\langle *, seq(C), i \rangle$ and this notify completes before $R$ reads $A[i]$ as required.
	%There are two cases, either line \ref{line:rcas_ann} of some \cas{} operation by process $p_i$ occured between the linearzation point of $C$ and the end of $R$, or otherwise. Supposing otherwise, we can see from the code that the only instructions that write to $A[i]$ between the linearization point of $C$ and the end of $R$ are part of notify operations. 
\end{proof}

Lemma~\ref{lem:rcas_line}, plus the fact that each operation only performs a constant number of steps, immediately lead to the following theorem.

%In the full version of this paper, we show the following theorem.
\begin{theorem}
\label{thm:rcas}
Algorithm \ref{alg:reccas} is a strictly linearizable, contention-delay-free and recovery-delay-free implementation of a recoverable CAS object.
\end{theorem}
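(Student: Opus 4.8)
The plan is to obtain the three asserted properties separately, harvesting as much as possible from results already in hand. Strict linearizability requires essentially no new work: Lemma~\ref{lem:rcas_line} already states that Algorithm~\ref{alg:reccas} is a strictly linearizable implementation of a recoverable CAS object, together with explicit linearization points for \readop{}, \cas{}, and \recover{}. So for the first clause I would simply cite that lemma (which has already absorbed the ABA-freedom assumption needed for the \cas{} case), and move on.

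Next I would record the one structural observation that drives both delay claims: each of \readop{}, \cas{}, and \recover{} is straight-line code performing only a constant number of primitive steps---a read of $x$, at most two CAS/write steps on the announcement array $A$, and at most one CAS on $x$---with no loops and no dependence on the number of processes or on the length of the execution. This immediately gives recovery-delay-freedom, since recovering from a crash amounts to invoking \recover{} (inside \texttt{checkRecovery}), which returns in $O(1)$ steps regardless of where the crash occurred; it likewise shows the implementation is computation-delay-free, as every high-level operation is realized by $O(1)$ primitive operations.

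For contention-delay-freedom I would appeal to Lemma~\ref{lem:delay}: distinct recoverable CAS objects are implemented over disjoint sets of primitive objects (each instance owns its own cell $x$ and its own array $A$), so the disjointness hypothesis holds, and the $O(1)$-computation-delay bound then upgrades to an $O(1)$-contention-delay bound. The step I expect to be the main obstacle is justifying that contention does not inflate \emph{within} a single recoverable CAS object---specifically on a shared announcement slot $A[i]$, which many processes may notify concurrently, so that $A[i]$ could appear heavily contended even when the abstract object is not.

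The resolution is a charging argument, which is really the content of Lemma~\ref{lem:delay} applied here. Every notify of $A[i]$ is issued from inside some concurrent high-level \cas{} or \recover{} operation, and that same operation must also access the shared cell $x$; hence each unit of contention charged to $A[i]$ during an operation $op$ can be attributed to a distinct concurrent high-level operation, which in the mapped execution $E'\in E'_M$ also contends with $op$'s abstract counterpart on the recoverable CAS object. Since $op$ itself touches only $O(1)$ primitive cells and each concurrent operation contributes $O(1)$ primitive responses, summing over the constant number of primitives that $op$ accesses shows its total contention is within a constant factor of the contention experienced by the corresponding abstract access---exactly the contention-delay bound required. Combining the three clauses yields the theorem.
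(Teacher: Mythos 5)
Your proposal is correct and follows essentially the same route as the paper, which proves this theorem in one line by combining Lemma~\ref{lem:rcas_line} with the observation that every operation performs only a constant number of steps. The extra detail you supply on contention-delay-freedom (disjoint primitive objects per instance plus the charging argument on the announcement array) is exactly the reasoning the paper itself uses via Lemma~\ref{lem:delay} in the proof of Theorem~\ref{thm:delay}, so you are filling in a step the paper leaves implicit rather than taking a different path.
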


\section{Proof of Lemma \ref{lem:delay}}

\begin{lemma}
	Let $A$ be a $k$-delay simulation of $A'$. 
	If for every two base objects $O_{1}$ and $O_{2}$, the set of primitive objects used to implement $O_{1}$ is disjoint from the set used to implement $O_{2}$ in $A$, then $A$ is a $k$-contention-delay simulation of $A'$.
\end{lemma}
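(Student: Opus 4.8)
The plan is to verify the two clauses in the definition of a $k$-contention-delay simulation. Clause~1, that $A$ is a $k$-computation-delay simulation of $A'$, is precisely the hypothesis, so all the work goes into clause~2. To that end I would fix an arbitrary execution $E$ of $A$ and an operation $op$ of $A$ that implements a base-object access $I$ to some base object $O$, and suppose $op$ experiences $kC$ contention in $E$. By the definition of contention, this means that $kC$ responses to operations on the primitive objects accessed by $op$ fall inside $op$'s execution interval. Since $op$ implements an access to $O$, every primitive object it touches belongs to the set used to implement $O$; by the disjointness hypothesis, no such primitive object is used to implement any other base object. Hence every one of these $kC$ responses is part of the implementation of some simulated access to $O$, and not of any access to a different base object.

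Next I would bound the number of \emph{distinct} simulated accesses to $O$ that can account for these responses. Because $A$ is a $k$-computation-delay simulation, the implementation of each base-object access to $O$ consists of at most $k$ steps, and therefore contributes at most $k$ responses on $O$'s primitive objects. Consequently the $kC$ responses counted above originate from at least $C$ distinct simulated accesses to $O$, say $op_1,\dots,op_C$. Each $op_j$ has at least one primitive response lying in $op$'s interval, so the execution interval of $op_j$ shares a time point with that of $op$ and hence overlaps it. It follows that in the partial order on accesses to $O$ induced by $E$, neither does $op_j$ precede $op$ nor does $op$ precede $op_j$: each $op_j$ is \emph{incomparable} to $op$.

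The remaining step, which I expect to be the main obstacle, is to exhibit a concrete execution $E' \in E'_M$ in which the base-object access $I$ corresponding to $op$ experiences contention at least $C$. The guiding idea is that incomparability gives us scheduling freedom: the partial order places no constraint forcing the base-object accesses corresponding to $op_1,\dots,op_C$ to complete before $I$ begins or to begin after $I$ ends. I would therefore build $E'$ as any total order of base-object accesses consistent with the partial order of $E$ in which the $C$ accesses corresponding to $op_1,\dots,op_C$ are all scheduled to respond within $I$'s interval—recalling that for atomic accesses contention is accounted for pessimistically, so $I$'s interval begins at the completion of the executing process's previous step and can accommodate these responses. The delicate part is to argue that such a schedule is legal: one must check that inserting these $C$ responses into $I$'s interval respects the partial order on accesses to $O$ (which it does, precisely by the incomparability established above) while also honoring each process's program order and the atomicity of $O$, since earlier steps of the relevant processes can simply be placed still earlier. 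Granting this, $I$ sees at least $C$ completions of operations on $O$ during its interval, i.e.\ contention at least $C$, which is exactly clause~2. As $op$, $E$, and $C$ were arbitrary, $A$ is a $k$-contention-delay simulation of $A'$.
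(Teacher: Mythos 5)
Your proposal is correct and follows essentially the same route as the paper's proof: use disjointness to attribute all $kC$ contention points to primitive steps of other simulated accesses to the same base object, divide by $k$ (each simulated access contributes at most $k$ steps) to get at least $C$ concurrent accesses, and then pick an execution in $E'_M$ that schedules those $C$ accesses inside the corresponding base-object access's (pessimistically defined) interval. Your write-up is somewhat more explicit than the paper's about the incomparability in the partial order and the legality of the constructed schedule, but the argument is the same.
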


\begin{proof}
	Consider an execution $E$ of $A$ in which an operation $op$ by process $p$ on object $O_1$ experiences $k*C$ contention. Since $O_1$ is implemented with primitive objects that are not shared with any other object, all contention experienced by $op$ must be from other operations that are accessing $O_1$. Note that each step by another process accessing $O_1$ can cause at most one contention point for $op$. Thus, there must be at least $k*C$ steps by other processes on the primitive objects $op$ is accessing within $op$'s interval. Since $A$ is a $k$-delay simulation of $A'$, and $O_1$'s primitive objects are not shared with any other base object, there must be at least $C$ other accesses of $O_1$ that are concurrent with $op$. 
	So, $E$ must map to an execution $E'$ of $A'$ in which all $C$ of these accesses to $O_1$ happen before $op$'s corresponding access, but after the last operation of $p$. Therefore, in $E'$, $op$ experiences at least $C$ contention.
\end{proof}

\section{Read-CAS Capsule Correctness Proof}

\begin{theorem}
	If $C$ is a CAS-Read capsule, then $C$ is a correct capsule. We also require that each process increments the sequence number before calling $CAS$.
	% For any execution in which the CAS-Read capsule is restarted due to failure any number of times, all but one instance of each instruction are invisible.
\end{theorem}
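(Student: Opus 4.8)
The plan is to verify the two requirements of Definition~\ref{def:caps_llcorrect} directly. The first requirement---that $\mathcal C$'s execution not depend on local values held before the capsule began---holds essentially by construction: on any restart the \texttt{crashed()} branch of Algorithm~\ref{alg:general_trans} reloads every local variable (including the CAS arguments $\var{exp}$, $\var{new}$ and the sequence number $\var{seq}$) from the copies persisted by the preceding capsule boundary, so the behavior of $\mathcal C$ is a function only of persisted state and of the shared memory it reads, never of volatile registers predating the capsule. I would also record here the consequence of the ``increment before CAS'' hypothesis: since $\var{seq}$ is derived deterministically from the persisted value of the previous capsule, every repetition of $\mathcal C$ uses one and the same sequence number $q$, and $q$ is strictly larger than any sequence number the process used in earlier capsules. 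This last fact is what lets the recovery machinery identify precisely this capsule's CAS.

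For the second requirement I would first reduce to reasoning about shared operations. By Theorem~\ref{thm:reccas} the \cascam{} object $X$ is strictly linearizable, so I treat each \cas{}, \readop{}, and \recover{} on $X$ as a single atomic step and argue at the level of the base objects of the underlying program. The operations $\mathcal C$ performs are: at most one \recover{} (inside \texttt{checkRecovery}) per attempt, at most one \cas{} per attempt, an arbitrary number of shared \readop{}s, and some private heap reads and writes. I claim that every operation except a \emph{successful} \cas{} can be placed in the invisible set. A \recover{} and a \readop{} are pure queries that do not change the abstract value of any base object, so deleting them preserves legality; a failed \cas{} returns false and likewise leaves $X$ unchanged, so it too is invisible. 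The private heap writes are invisible to other processes by definition, and the CAS-Read capsule places a boundary between any heap read and a subsequent write to the same location, so by the write-after-read-conflict argument of Blelloch \textit{et al.}~\cite{blelloch2018parallel} repeating them is harmless; hence they contribute nothing observable to the shared history.

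The crux is therefore to show that, across all repetitions of $\mathcal C$, at most one \cas{} with sequence number $q$ is linearized as successful, and that whichever clean, completing attempt actually reaches the boundary is consistent with that single success. I would argue as follows. A repetition of $\mathcal C$ can only be triggered by a crash, so every attempt after the first enters through the \texttt{crashed()} branch and therefore calls \texttt{checkRecovery}$(X,q,\var{pid})$ before touching $X$. Suppose some attempt linearizes its \cas{} successfully, changing $x$ from the expected value $a$ to $b$. Because the program is ABA-free, $x$ never again holds $a$, so in every later attempt one of two things happens: either \texttt{checkRecovery} returns true---by the specification of \recover{} underlying Theorem~\ref{thm:reccas}, a successful \cas{} by $\var{pid}$ with sequence number $q$ (the largest this process has issued) forces \recover{} to report flag $1$ and a sequence number $\ge q$---in which case the \cas{} is skipped; or the \cas{} is re-attempted but reads $x \neq a$ and fails harmlessly. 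In neither case is a second success produced. Deleting all invisible operations then leaves a history in which $\mathcal C$ performs exactly one (possibly successful, possibly failing) \cas{} followed by its reads, i.e.\ $\mathcal C$ appears to have executed once, which is the second requirement.

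The step I expect to be the main obstacle is the safety argument in the previous paragraph: coordinating the sequence-number bookkeeping of \texttt{checkRecovery} with the \recover{} specification and with ABA-freedom so as to rule out simultaneously (i) a duplicated successful CAS and (ii) any residual observable effect of the skipped or failed re-attempts. The ABA-freedom hypothesis is doing the real work for (i)---it is what guarantees a re-attempted \cas{} can only fail once the first has succeeded---while the strict-linearizability recovery guarantee is what ensures the skip decision is never taken erroneously. A secondary subtlety worth spelling out carefully is that the notify and announce writes internal to a failed or re-attempted \cascam{} do not leak into the history we reason about, which is exactly what strict linearizability of $X$ buys us: those writes are implementation detail of a single atomic object operation, not separate base-object events.
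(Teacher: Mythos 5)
Your proof is correct and takes essentially the same route as the paper's: both verify the two conditions of Definition~\ref{def:caps_llcorrect} by noting that restarts reload all locals (including a fixed sequence number) from the persisted boundary, and both use strict linearizability of the \cascam{} object to argue that reads, recoveries, failed CASes, and all but one instance of the CAS are invisible, so the capsule appears to execute once. Your additional appeal to ABA-freedom as a second line of defense is harmless but not needed, since (as the paper's proof notes) the recovery specification already guarantees the CAS is re-issued only when every earlier instance was invisible and, by strict linearizability, can never later become visible.
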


\begin{proof}
	Consider an execution of $C$ in which the capsule was restarted $k$ times due to crashes.
	% Our ultimate goal is to construct a set of invisible operations to 
	% First note that all of the local variables used in $C$ are loaded from persistent memory when the capsule starts and written to a different location in persistent memory when the capsule ends. Within the capsule, we never write to the first set of locations, which means we start with the same set of values in our local variables each time we rerun the capsule. This also applies to the sequence number, as it is a local variable. Also, we never read from the persistent memory locations that we write to at the end, so we just need to keep the last set of writes to persistent memory and all others are invisible. Note that since these persistent memory locations store local variables, they will not be read or modified by any other process.
	
	% The hardest part is deciding which copy of the $R\_CAS$ to keep because it is the only operation that can be visible to other processes.  
	
	First note that if a crash occurred, then the capsule never uses any of the local variables before overwriting them. Therefore, its execution does not depend on local values from previous capsules. This includes the sequence number for the capsule, which must have been written in persistent memory before the capsule started, and is therefore the same in all repetitions of the capsule.
	
	Note that in all but the first (partial) run of the capsule, the \texttt{crashed} function must return true.
	Furthermore, note that the code only repeats \texttt{X.Cas()} if \texttt{checkRecovery} returns false. Due to the correctness of the recovery protocol, this happens only if each earlier operation with this capsule's sequence number has not been executed in a visible way. This means they are either linearized and invisible, or they have not been linearized at all. 
	The partial executions have not been linearized cannot become linearized at any later configuration because $X$ is strictly linearizable.
	Therefore the \texttt{X.Cas()} call is only ever repeated if the previous calls that this capsule made to it were invisible, so all but the last instance of the \texttt{X.Cas()} operations executed are invisible. Furthermore, since \texttt{X.Cas()} is a strictly linearizable implementation of CAS, the effect of instances together is that of a single CAS. %all invisible repetition plus the final one is that of a single CAS.
	Note that the rest of the capsule is composed of only invisible operations; the recovery of an object is always invisible, as are Reads and local computations.
\end{proof}

\section{Formal Definition of Detectability}

\textbf{The Failure-Recovery Model.}
	An \emph{execution}, $E$, in the failure-recovery model involves three kinds of \emph{events} for each process $p$ in the system; \emph{invocation} events $I_p(op, obj)$, which invoke operation $op$ on object $obj$, \emph{response} events $R_p(op, obj)$, in which object $obj$ responds to $p$'s operation, and \emph{crash} events $C_p$. Crash events are not operation- or object-specific. On a crash event, $p$ loses all of the data in its volatile cache.
	A process $p$ takes \emph{steps} in an execution, which constitute atomic accesses to base objects, and together make an \emph{implementation} of the \emph{high-level operations} represented by the invocation and response events of the execution. Throughout this paper, we refer to steps as \emph{low-level instructions}.
	
	We require an object to provide a specific interface to be considered \emph{designed for the failure-recovery model}. In particular, an object must provide a special recovery operation that can be called after a crash. 
	
	% \Hao{CAS-CAM is not a failure-recovery algorithm by this definition because recovery ignores read operations.}
	\begin{definition}
		$A$ is a \emph{failure-recovery object} if 
		\begin{enumerate}
			\item All of its operations take in a sequence number as a parameter (in addition to any number of other parameters), and
			\item It has a special \emph{recovery} operation.
		\end{enumerate}
		%all of its operations take in a sequence number as a parameter (in addition to any number of other parameters)
		%it comes with a recovery function and each of its operations takes a sequence number as a parameter (except the recovery function). The purpose of the recovery function is to return the sequence number of the last operation completed by the calling process as well as the return value of that operation.
	\end{definition}

In this section, we present and motivate several definitions that culminate in a definition of detectability and correct encapsulation.

To compare an algorithm for the classic model to its failure-recovery version, we employ \emph{equivalent executions}, which were presented in \cite{timnat2014practical}
%To formally claim that we preserve an algorithm's effect through our transformations, we use the concept of \emph{equivalent executions}, which was presented in \cite{timnat2014practical}. 
%This definition lets us argue that for any execution of the transformed algorithm, there exists an equivalent execution of the original algorithm. The definition is presented below.
	\begin{definition}
		Executions $E$ and $E'$ are \emph{equivalent} if the following conditions hold:
		\begin{enumerate}
			\item In both executions, all threads execute the same operations and get identical results.
			\item The order of invocation and response points of all high-level operations is the same in both executions.
		\end{enumerate}
	\end{definition}
	
Note that the definition of equivalence does allow for the two executions to have a different sequence of low-level instructions (base object calls). This flexibility is exploited by our algorithms, which may repeat some instructions upon a crash, but guarantee that the responses of high-level operations remain unaffected.

Before defining detectability, we discuss a few other important concepts. First, we restrict how a failure-recovery algorithm can be used.
 %in order for it to keep its guarantees. 
Intuitively, we require that a recovery operation be invoked immediately after every crash event. Furthermore, no response can occur for an operation that was interrupted by a crash. 
%Note that this does not mean that the interrupted operation can never take effect. 
%In particular, if the process whose operation was interrupted by a crash later invokes a new operation, 
We summarize these requirements in the definition of an \emph{admissible execution.}
	%Next, we define what it means to be an admissible execution in the failure-recovery model.
	\begin{definition}
		$E$ is an \emph{admissible execution} if, for every process $p$ and object $O$:
			\begin{enumerate}
				\item Each invocation by $p$ is followed by either a response or a crash for $p$. That is, we require invocation-response or invocation-crash pairs. Response events cannot appear outside of an invocation-response pair.
				\item The sequence numbers used to call $O$'s non-recovery operations are non-decreasing.
				%\item After a crash event for process $p$, the first operation $p$ invokes on each object that it accesses must be recovery operation.
				\item After a crash event of $p$ that interrupts $p$'s operation on $O$, $p$ cannot invoke any non-recovery operations on any object until it invokes a recovery operation on $O$.
				%A process always invokes the recovery operation following a crash event that interrupted one of its operations.
			\end{enumerate}
	\end{definition}
	%\Hao{For composability, we need some way of talking about executions consisting of multiple objects/algorithms}

We are now ready to discuss what it means for a failure-recovery algorithm to be correct. For this purpose, we assume that every failure-recovery algorithm tries to simulate the behavior of some concurrent algorithm in the standard model. 
%Recall that the recovery operation of a failure-recovery algorithm returns an indication of whether the last operation can be safely repeated. 
We treat the recovery operation of an object as returning an indication of whether the last operation can be safely repeated.
Intuitively, for an algorithm to be considered correct, its recovery operation must always return an acceptable signal; if we treat `repeatable' operations as never having happened, and `non-repeatable' ones as having taken effect, the resulting execution should be a legal execution of the standard-model algorithm that it simulates.
More formally, we define the \emph{projection} of a failure-recovery execution on a standard shared memory execution as follows.
%say that a failure-recovery algorithm is \emph{correct in isolation} if all its admissible executions can be \emph{projected} onto some execution of the original algorithm that it simulates.

	\begin{definition}
	An execution $E$ in the failure-recovery model \emph{projects} onto an execution $E_S$ in the standard model if
	\begin{enumerate}
		\item We can construct a new execution $E'$ from $E$ by removing all crash events from $E$, and treating each instance $Rec$ of the recovery operation in $E$ as follows:
		%\begin{enumerate}
			%\item Remove all crash events from $E$.
			%\item For every call $Rec$ to the recovery operation,
			\begin{enumerate}
				\item If $Rec$ indicated that an invoked operation cannot be repeated, add a matching response immediately following $Rec$, if such a response does not appear in $E$.
				\item Otherwise, remove the last invocation before $Rec$, as well as its response (if it exists).
				%, if that invocation does not have a matching response.
				\item Finally, remove the invocation and response of $Rec$ from $E$.
			\end{enumerate}
		%\end{enumerate}
		\item $E'$ is equivalent to $E_S$.
	\end{enumerate}
\end{definition}

%\begin{definition}
%	A failure-recovery algorithm $A$ simulating a standard algorithm $S$ is \emph{correct in isolation} if every admissible execution of $A$ projects onto some execution of $S$.
%\end{definition}

We consider an algorithm $A$ in the failure-recovery model to be a correct simulation of an algorithm $S$ in the standard model if all of $A$'s admissible executions project onto executions of $S$. 
Note that the definition of projections is general enough to apply to many algorithms. In particular, because our correctness definition is with respect to a specific algorithm in the standard model, this allows us to `port over' definitions and properties from the standard model into the failure-recovery model. For example, if $A$ is a failure-recovery algorithm that is correct with respect to a sequentially consistent algorithm, then $A$ is sequentially consistent in the failure-recovery model. This applies to all properties that can be defined on the executions of an algorithm. %, including linearizability,

We now show that projection of linearizable executions is local. This means that an execution that uses two base objects that are both correct with respect to linearizable standard algorithms is itself a linearizable execution in the failure-recovery model. The following theorem formalizes this notion.
%  algorithms for the failure-recovery model that are correct with respect to linearizable standard algorithms
\begin{theorem}
	\label{thm:compose}
	Let $E$ be an execution in the failure-recovery model and let $O$ be the set of objects in $E$. If for all $x \in O$, the sub-execution $E|_x$ projects onto a linearizable execution, then $E$ projects onto a linearizable execution.
\end{theorem}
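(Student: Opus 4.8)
The plan is to reduce this statement to the classical locality (composability) property of linearizability in the standard model. The argument has three parts: first, build the global projection $E'$ of $E$; second, show that the projection operation commutes with restriction to a single object, so that $E'|_x$ coincides with the projection of $E|_x$; and third, invoke the standard-model locality theorem to conclude that $E'$ is linearizable.

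First I would fix one concrete construction of $E'$ from $E$ by applying the rules in the definition of projection: remove all crash events, and for each recovery instance $Rec$ either add a matching response (when $Rec$ reports that the interrupted operation cannot be repeated) or delete the recovered invocation together with its response (when $Rec$ reports that it can), and finally delete the invocation and response of $Rec$ itself. Since every surviving invocation then has a matching response, $E'$ is a well-formed, complete execution of the standard model, and $E$ projects onto $E'$ (take $E_S = E'$, as any execution is trivially equivalent to itself). It therefore suffices to prove that $E'$ is linearizable.

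The main step is a commutation lemma: for every object $x$, the restriction $E'|_x$ equals the projection $(E|_x)'$ of the single-object sub-execution. The observation driving this is that each recovery instance $Rec$ is associated with exactly one object, and by admissibility — in particular, that after a crash interrupting an operation on an object $O$ the process must invoke a recovery on $O$ before any further invocation — the operation recovered by $Rec$ is an operation on that same object. Hence every edit that the projection rules make on account of $Rec$ (adding a response, deleting an invocation and its response, or deleting $Rec$) touches only events of operations on $Rec$'s object. Consequently the edits triggered by recoveries on $x$ affect exactly the $x$-events and are precisely the edits performed when projecting $E|_x$, while recoveries on other objects leave the $x$-events untouched; choosing the global construction of $E'$ consistently with the per-object constructions then yields $E'|_x = (E|_x)'$.

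Finally I would combine the pieces. By hypothesis $E|_x$ projects onto a linearizable standard execution $E_{S,x}$, so $(E|_x)'$ is equivalent to $E_{S,x}$; since equivalence preserves the order of invocation and response points and the operations' results, and linearizability depends only on this data, $(E|_x)'$ is itself linearizable, whence $E'|_x$ is linearizable for every $x$ by the commutation lemma. As $E'$ is a complete standard-model execution all of whose single-object restrictions are linearizable, the classical locality property of linearizability (Herlihy and Wing) gives that $E'$ is linearizable, which finishes the proof. I expect the commutation lemma to be the main obstacle: it demands a careful, rule-by-rule verification that the projection transformation is local to each recovery's object, the delicate point being to pin down that the ``last invocation before $Rec$'' is indeed the process's interrupted operation on $Rec$'s object, which is exactly what the admissibility conditions are designed to guarantee.
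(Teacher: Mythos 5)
Your proposal is correct and follows essentially the same route as the paper's proof: establish that projection commutes with restriction to a single object (so $E'|_x = (E|_x)'$), verify that $E'$ is a well-formed standard-model history, and then invoke the classical locality of linearizability. You actually spell out the commutation step and the role of admissibility in identifying the recovered invocation more carefully than the paper does, which merely asserts it.
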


\begin{proof}
	Let prime denote the projection operator. By following the steps for constructing a projection, we can see that $E'|_x = (E|_x)'$ (in other words $E'|_x$ is the projection of $E|_x$) for each $x \in O$. This means that each $E'|_x$ is linearizable because $(E|_x)'$ is linearizable. Therefore, if we can show that $E$ is a legal history, then the linearizability of $E'$ would follow from the locality of linearizability. A legal history is simply a history in which processes alternate between invocations and responses, beginning with an invocation. Therefore we just need to show that operations by a single process do not overlap in $E'$. This is because a crash event can only interrupt one operation per process, and after the crash, that operation will either receive a response before the invocation of the process's next operation, or the invocation of that operation will be removed. In either case, for each process, invocations are always followed by responses in $E'$.
\end{proof} 

%Note that deciding which operations `can be repeated' in an algorithm is not a trivial task. In particular, it might depend on the higher-level program that calls the algorithm. 
%For example, consider a CAS attempt that failed (i.e. did not change the object's state). In some contexts, this CAS attempt could be safely repeated without consequence, as it will always fail.
%%; since we assume that the ABA problem does not exist in the algorithms we consider, this CAS operation will always fail, and will always return 0. 
%However, a higher-level program could be using its CAS operations' outputs for something sensitive. For example, consider a program that records a log of the operations that it has executed. In this case, it could log a CAS attempt has occurred. If this CAS is then later repeated after a crash, the program would record another log entry. This would result in a different log than any that the program could have produced were there no crashes at all. Thus, despite the fact that the CAS object's execution is a legal one even when ignoring the repeated failed CAS attempts, the same is not true for the higher-level program. %This is why the above definition is only correct \emph{in isolation}.
%To formalize this concept, we introduce the notion of \emph{invisible} instructions.

 %where this algorithm will be run.

%To avoid the issue of taking the higher-level context into account when designing a failure-recovery algorithm, 
We now define \emph{detectable implementations}, which make specific demands of their recovery operations, to allow them to be easily usable in higher-level programs. 
%Intuitively, this condition is very similar to isolated correctness, but puts stricter requirements on the recovery operation. This might result in some overhead for the algorithm, but guarantees that the algorithm's recovery operation can be trusted regardless of the context in which it is run.

	\begin{definition}
		A failure-recovery algorithm $A$ is a \emph{detectable implementation} of a standard algorithm $S$ if
		in any admissible execution $E$ of $A$, the following holds:
			%\item For each operation of $C$, there is a corresponding operation in $A$ which takes an additional sequence number parameter.
			%\item $A$ is a failure-recovery algorithm implementing the same operations as $C$.
		%In any admissible execution $E$ of $A$, the following holds:
			\begin{enumerate}
				\item Every call $Rec$ to the recovery operation by process $p$ in $E$ returns a tuple consisting of a sequence number and a value. 
				The sequence number must correspond to the most recently invoked operation by $p$.
				The value is the response of that operation, or $\bot$ if that operation did not take effect.
				% whose sequence number was returned, or $\bot$ if that operation did not complete.
				% either returns the sequence number of the most recent operation invoked by $p$, without a .and result of either the most recent invocation by $p$ or the most recent response of any operation by $p$.
				%If $p$ has not invoked any operations in $E$, then $Rec$ returns the initial value of the object, with $-1$ as the sequence number.
				\item There is an execution $E'$ of $S$ such that $E$ projects onto $E'$.
			\end{enumerate}
	\end{definition}

%An algorithm that satisfies our definition of detectability can be run, with correct use of its recovery operation, in any context, without risking the correctness of the program.
This definition is a stronger property than the concept of detectability defined in~\cite{friedman2018persistent}, since it requires sequence numbers, is specific about the information that the recovery operation can return, and applies to the last operation, even if that operation was not interrupted by a crash.
%, and requires correctly simulating the original algorithm. 
However, all detectable algorithms that we are aware of in the literature actually satisfy our stronger definition \cite{cohen2018inherent,friedman2018persistent} (with minor tweaks to make them use sequence numbers).
Note that if the designer of a data structure knows the context in which the data structure will be used, it is possible that the implementation could be optimized for that specific context by not making it detectable.

\end{document}